\newcommand{\tuple}[1]{\langle #1 \rangle}
\newcommand{\remove}[1]{}
\newtheorem{theorem}{Theorem}
\newtheorem{condition}{Condition}
\newtheorem{lemma}{Lemma}
\begin{document}

\title{Energy Efficient Restoring of Barrier Coverage in Wireless Sensor Networks Using Limited Mobility Sensors}

\author{Dinesh Dash}
\affil{NIT Patna, Bihar, India}
\author[2]{Anurag Dasgupta}
\affil{Valdosta State University, Valdosta, GA 31698, USA}


\abstract{In Wireless Sensor Networks, sensors are used for tracking objects, monitoring health and  observing a region/territory for different environmental parameters. Coverage problem in sensor network ensures quality of monitoring a given region. Depending on  applications different measures of coverage are there. Barrier coverage is a type of coverage, which ensures all paths that cross the boundary of a region intersect at least one sensor's sensing region. The goal of the sensors is to detect intruders as they cross the boundary or as they penetrate a protected area. The sensors are dependent on their battery life. Restoring barrier coverage on sensor failure using mobile sensors with minimum total displacement is the primary objective of this paper. A centralized barrier coverage restoring scheme is proposed to increase the robustness of the network. We formulate restoring barrier coverage as  bipartite matching problem.  A distributed restoring of  barrier coverage algorithm is also proposed, which restores it by first finding existing alternate barrier. If alternate barrier is not found, an alternate barrier is reconstructed by shifting existing sensors in a cascaded manner. Detailed simulation results are shown to evaluate the performance of our algorithms.

{\bf Keywords :} Restoring Coverage, Barrier Coverage, Coverage Problem, Wireless Sensor Network}

\maketitle

\section{Introduction}

Coverage problem in sensor network measures the quality of surveillance provided by the deployed sensors. Various definitions are popular to measure the quality of coverage such as {\em target coverage} \cite{cardei:2005}, {\em line coverage} \cite{agnetis:2009}, {\em area coverage} \cite{huang:2005}, {\em barrier coverage} \cite{kumar:2005} etc. {\em Target coverage} demonstrates that a set of target points lie within the sensing range of some sensors. {\em Line coverage} ensures that a given set of line segments must be fully covered by the sensors. {\em Area coverage} guarantees all the points within the region are covered by the sensors.


{\em Barrier coverage} ensures that the sensor network surrounds the boundary of an area with sensors such that all the paths that cross the boundary must go through the sensing range of some sensors. If the sensor network ensures that all crossing paths intersect at least $k$ sensors' sensing regions then the boundary of the area is $k$-barrier covered. Ensuring {\em barrier coverage} is a challenging issue in sensor network for intruder detection in continental border. Nowadays, deploying mobile sensor networks is common and it is extremely useful in hostile environments such as battlefields and hazardous areas. In case of random sensor deployment, Saipulla et al. \cite{saipulla:2010}, showed that usage of the number of mobile sensors with limited mobility to achieve barrier coverage is significantly lesser than the number of static sensors. Recently, extensive research has been going on different measures for barrier coverage, scheduling scheme to prolong the lifetime of the network and deployment schemes to ensure barrier coverage both using static sensors and mobile sensors.


Due to non deterministic initial deployment there may exist uncovered regions or for power drainage sensors may fail and create uncovered regions. Therefore, recovery schemes are required to ensure the quality of coverage and it is a challenging issue in sensor network. In literature \cite{yu:2008,zhang:2005,manoj:2007}, different dynamic area coverage schemes are proposed by using the existing mobile sensors to recover the coverage hole.  Most of the works on restoring coverage are focused on area coverage but there is no existing work for restoring barrier coverage. In this paper, we are focused on restoring barrier coverage. Initially, a barrier is formed by a subset of the deployed sensors. The sensors on the barrier are treated as {\em barrier nodes} and the remaining sensors are treated as {\em non-barrier nodes}. As time passes, {\em barrier nodes} may fail and create uncovered passages through which an intruder can cross the barrier. Restoring  barrier coverage is required to ensure quality of coverage. In \cite{wang:2014}, authors proposed a  barrier coverage restoration scheme. They estimate the lower bound of the number of mobile sensors to interconnect any two static sensor nodes and a progressive scheme to interconnect two static sensors. But using this scheme repairing a broken barrier is difficult because it assumes that someone is supplying mobile sensors on fly as needed. On the other side, we have proposed a quite easy and practical solution to repair a broken barrier with the help of nearby redundant mobile sensors with minimum total displacement. For practicability and scalability, we have also propose a distributed barrier coverage restoration algorithm (DBCRA). Our simulations results show that DBCRA outperforms the randomized scheme in both in terms of total displacement and successful restoring of barrier.


In this paper, our main contributions are summarized as follows:

\begin{itemize}
   \item We propose a barrier coverage restoring scheme after sensor node failure using cascading node shifting. The algorithm ensures  the maximum displacement capacity constraint of the sensors based on their remaining energy and the  total displacement made by all the sensors  is minimum.
   \item We formulate the relocating of mobile sensors to repair the barrier as a minimum cost  maximum bipartite matching problem, and solve it in polynomial time using the Hungarian algorithm. 
   \item  A distributed barrier restoring algorithm is also proposed.
   \item Performance of the algorithm is evaluated and compared.  
\end{itemize}


The rest of the paper is organized as follows. Section 2 briefly discusses related work on coverage maintenance schemes. Section 3 presents some necessary background and defines our problem. In Section 4, our centralized algorithm is presented for reconstructing the barrier on failure of one or multiple sensors and their time complexity is analyzed. A distributed algorithm is discussed in Section 5. Section 6 presents the experimental results. Finally, Section 7 concludes the paper.

\section{Related Works}
\label{related_work}


In \cite{manoj:2007}, Sekhar et al. propose a dynamic coverage maintenance (DCM) scheme to improve the area coverage of a mobile sensor network during its lifetime by exploiting the inherent redundancy in coverage. Four new heuristics are proposed for selection of the migrating sensors. In \cite{yu:2008}, authors propose a scheme where redundant sensors on the boundary of an uncovered area are determined in a distributed manner. Here each sensor independently finds whether it is a redundant sensor in a probabilistic way which depends on the number of neighbor. Similarly, when a sensor finds itself as a boundary node of an uncovered area, it issues a control packet over the network for calling other redundant sensors. On receiving such control packet by a redundant sensor it first registers itself to the initiator and then moves to the uncovered area. Zhang et al. \cite{zhang:2005} present a distributed self healing area coverage scheme where few sensors are mobile and the remaining are static. It works in three steps. In the first phase, it determines the boundary of the uncovered region. In the second phase, locations of the mobile sensors to cover the uncover region is determined. Finally, it calls the mobile sensors to place them in the target positions.


Kumar et al. \cite{kumar:2005} have defined the barrier coverage problem
and propose a centralized algorithm to verify barrier coverage. In \cite{kumar:2007}, they have proposed a centralized scheduling scheme to achieve maximum life time by switching off the redundant barrier for both homogeneous and heterogeneous battery life time sensors.

The algorithm uses minimum number of path switching. In \cite{liu:2008}, Liu et al. have proposed a distributed scheduling algorithm to provide {\em strong barrier} coverage with high probability and low communication cost. They prove that if the sensor density reaches a certain value in a rectangular area of width $w$ and length $l$ with $w =\Omega(\log l)$, there exist multiple disjoint barriers with high probability. Chen et al. \cite{chen:2007} have proposed a measure for barrier coverage, known as $L$-local barrier coverage using a distributed approach. Finally, a distributed scheduling scheme is proposed to maximize the life time of the network. This scheme is able to verify global barrier coverage with high probability for thin barrier. They extend it for arbitrary belt and for heterogeneous sensors.


Shen et al. have proposed both centralized and distributed algorithm in \cite{shen:2008} to solve a constrained version of a barrier coverage problem by using mobile sensor where all the sensors are mobile. There is a given initial deployment of a set of sensors. Sensors relocate themselves on a line with equal distances between them such that they form a barrier and the total movement energy consumption is minimum. In \cite{bhattacharya:2009}, Bhattacharya et al. have shown that the mobile sensors have detected the existence of an unknown crossing path. The sensors reposition themselves most efficiently within a specified region such that they repair the existing security hole and thereby prevent intruders. Sensors are placed on to the perimeter of the region $R$ in equal separation such that (i) the longest movement is minimum and (ii) the total movement of the sensors is minimum. Saipulla et al. \cite{saipulla:2010}, have proposed an efficient sensor relocation algorithm so that after an initial deployment the sensors form maximum number of disjoint barriers with minimum movement. Here they assume that the sensors are deployed in a grid. The barriers are horizontal line with given positions. They proposed algorithm to verify whether there is a possibility to form barrier using the mobile sensors. Czyzowicz et al. \cite{czyzowicz:2010}, have proposed a sensors relocation strategy to cover a given line segment $I$ of length $L$. Initially, the sensors are deployed at arbitrary position on the line segment $I$. Sensors are moved such that the line is covered maximally and the total movement is minimum. They propose a centralized algorithm for sensors with equal sensing range. They have shown that for unequal sensing range the problem is NP-Complete. In \cite{wang:2013}, Wang et al. proposed a centralized reconstruction scheme to form a barrier from an initial deployment using directional sensor network using minimum number of mobile sensors with minimum total displacement. Wang et al. \cite{wang:2014} studied barrier coverage problem assuming that the nodes have location error. They established the minimum number of mobile sensors requirement to connect any to static sensors for two cases: only static sensors have location error and both static and mobile sensors have location error. They proposed a progressive deployment scheme using minimum mobile sensors to connect any two static nodes. 


Compared to the previous works, our proposed work reconstructs an alternate barrier by finding alternate path or shifting nearby sensors to reconstruct the old barrier after node failure. To the best of our knowledge, this is the first work to study the barrier restoration  problem with limited mobility constraint sensors. Our scheme not only minimizes the total displacement of the sensors, it also considers the remaining energy constraint of the individual sensor to prolong the network life time. We proposed both centralized and distributed solution of the problem.

\section{Background and Problem Statement}
\label{sec:background}

Let $S=\{ s_1,s_2  \ldots s_N \}$ denote a set of sensors placed within a rectangular barrier region of length $L$ and width $W$. Let $P= \{p_1,p_2 \ldots p_N \}$ be the positions of the sensors. We assume that the sensors have uniform circular sensing range $\rho$ and the communication range of the sensors is at least twice the sensing range. The intersection graph of the sensing regions of the sensors is modelled as a {\em unit disk graph} \cite{ kumar:2010}. In the {\em unit disk graph}, each sensor $s_i$ is modelled as a vertex, and an edge is added between two vertices $s_i$ and $s_j$ in the graph if the sensing regions of $s_i$ and $s_j$ intersect. Two dummy vertices $p_L$ and $p_R$ are placed at the left and right ends of the rectangular boundary respectively. An edge is added between vertices $s_i$ and $p_L$ if the sensing region of $s_i$ intersects the left boundary of the rectangle. Similarly, an edge is added between vertices $s_j$ and $p_R$ if the sensing region of $s_j$ intersects the right boundary of the rectangle. Given the intersection graph, a barrier is formed by the sensors represented by the vertices on any path between $p_L$ and $p_R$ in the graph. Note that since there can be multiple such paths, multiple such barriers may exist. Any one such path can be chosen as the barrier, and the nodes on the path are called {\em barrier nodes} and remaining nodes are called {\em  non-barrier nodes}. An example of sensors network and its barrier is shown in Figure \ref{fig:graphical_representation}. Figure \ref{fig:graphical_representation}(a) shows the sensing region of the sensors and Figure \ref{fig:graphical_representation}(b) shows its corresponding intersection graph. The subset of sensors $\{s_1,s_3,s_5,s_6,s_8,s_9,s_{10},s_{12}\}$ are on a path between $p_L$ and $p_R$ in the intersection graph and hence they form a barrier.

\begin{figure}[t]
\centering
\includegraphics[width=14cm]{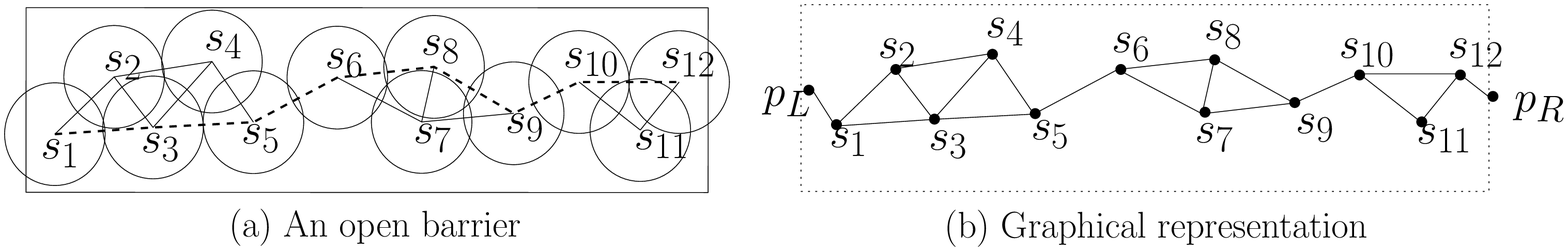}
\caption{A sensor network  and its intersection graph}
\label{fig:graphical_representation} 
\end{figure}

Given an initial deployment of sensors such that there exists a path on the corresponding intersection graph between $p_L$ and $p_R$, an initial barrier coverage is achieved. However, due to node failure, this path may break and the sensor network may fail to ensure barrier coverage anymore. For example, in Figure \ref{fig:graphical_representation}(a), if sensor $s_6$ fails, there is no longer any path in the graph in Figure \ref{fig:graphical_representation}(b) between $p_L$ and $p_R$ and the sensor network fails to ensure barrier coverage. In this case, one possible way to reestablish barrier coverage is to move one or more sensors to new positions to recreate a path between $p_L$ and $p_R$. However, if $s_3$ fails, even though the initial barrier constructed $\{s_1, s_3, s_5, s_6, s_8, s_9, s_{10}, s_{12} \}$  is broken, the network can still ensure barrier coverage by using an alternate path between $s_1$ and $s_5$  (for ex. $\{ s_1, s_2, s_4, s_5, s_6, s_8, s_9, s_{10}, s_{12} \}$ ). In this case, no sensor needs to be moved; however, some sensors which were not on the barrier before the failure are now included in the barrier, and hence, marked as barrier nodes. Thus, on the failure of one or more sensors, barrier coverage can be restored by following one of two approaches, finding alternate path between appropriate sensors without moving any sensor, or by moving existing sensors to new positions. Since moving sensors is expensive in terms of energy, our main objective is  to minimize the total displacement made / energy consumed by the sensors as well as satisfy the remaining  displacement capacity / remaining energy constraints of each individual sensor.   

\remove{

The restoring barrier coverage problem in the presence of a single/multiple sensor failure/s can be formulated as an optimization problem as follows. The maximum displacement capacity of a sensor depends on its remaining energy; let $e_i$ denote remaining energy  of sensor $s_i$ and energy consumption per unit displacement be $\delta$ then the remaining displacement capacity of $s_i$ be  $d_i= \frac{e_i}{\delta}$.  Let $F$ denote the set of failed sensors positions respectively. Due to some sensor failure $s_f \neq s_i$, let $p'_i$ denote the new position of $s_i$ after recovery of the failure. Therefore, $|p_i - p'_i|$ denotes the displacement made by sensor $s_i$. In barrier restoration problem nodes will reposition themselves such that there is a path between $p_L$ and $p_R$ in the new intersection graph; as well as the total displacement incurred by all sensor is minimum and  displacement incurred by individual sensor is less than or equal to their maximum displacement capacity limit. To ensure a path between $p_L$ and $p_R$ in the new intersection graph, we include the constraints of network flow problem and assigning flows $f$ on the edges of the unit disk graph of the new sensors positions. The capacity of each edge on the unit disk graph is assumed to be one unit. A path exists if and only if there exists a flow of one unit from $p_L$ to $p_R$ in the intersection graph, with $p_L$ and $p_R$ acting as the source and the sink respectively. Since total displacement of all sensors need to be minimized, the problem of restoring barrier coverage can be formulated as shown below.

\begin{eqnarray}
\mbox{Minimize}  \sum_{p_i' \in \{S \setminus F \}} |p_i - p_i'| \\
\mbox{Subject to :} 
\sum_{p_j' \in P \setminus F } f(p_i',p_j') =0  \quad  \forall p_i' \in \{P \setminus F \} \\
f(p_i',p_j')= -f(p_j',p_i')  \hspace{.5cm}  \forall p_i',p_j' \in \{P \setminus F  \} \\
\sum_{p_i' \in \{P \setminus F \} \wedge |p_L - p_i'| \le \rho } f(p_L,p_i') =1    \\
\sum_{p_i' \in \{P \setminus F \} \wedge |p_i' - p_R| \le \rho } f(p_i',p_R) =1    \\
|p'_i -p'_j|\times f_{p_i',p_j'} \le 2\rho \hspace{.5cm}  \forall p_i',p_j' \in \{P \setminus F  \} \\
f_{p_i',p_j'}=\{0,1\}  \hspace{.5cm}  \forall p_i',p_j' \in \{P \setminus F  \} \\
|p_i - p_i'| \le d_i \hspace{.5cm} \forall p_i' \in \{P \setminus F \}
\end{eqnarray}

$f(p_i', p_j')$ denotes the flow from node $s_i$ to node $s_j$ of the intersection graph on the new positions of the sensors. $f(p_L,p_i)$ and $f(p_i,p_R)$ denote the flow from $p_L$ to a node $s_i$, and from a node $s_i$ to $p_R$ respectively. Equation (2) and (3) maintains flow conservation at intermediate nodes. Equations (4) and (5) indicate that one unit flow must exit from $p_L$ and one unit flow must enter $p_R$.  Equation (6) implies that there is a flow between two nodes only if their sensing regions intersect (i.e, only if the corresponding edge exists in the intersection graph). Equation (7) maintains the capacity constraint on an edge. Finally, Equation (8) enforces the restriction on the maximum displacement capacity of a sensor which is depended on the remaining energy of the node. Note that this formulation also finds out the existence of any alternate barrier, as in that case, the total displacement will be zero, the minimum possible. However, since the flow assignment is arbitrary in case multiple paths exist between $p_L$ and $p_R$, it may change the existing barrier arbitrarily. Also, the above formulation allows all sensors to move  to new locations if necessary. 
}

However, as discussed before, if sensor fails, the hole created in the barrier should ideally be repaired locally if possible by finding alternate path segments to breach the hole without changing the rest of the barrier farther away from the hole. If alternate path not found, then sensors need to be moved. Ideally, only nearby sensors should move. In this paper, we are solving a restricted version of the above general restoring barrier coverage problem, where  sensors are not allowed to move any arbitrary position but they are relocating themselves to one of the existing sensors positions within their limited displacement capacity, which may vary depending on their remaining energy. In the next two sections, we propose restoring barrier coverage  protocol: a centralized version as well as a simple distributed algorithm of it.

\section{Restoring Barrier Coverage Algorithm}
\label{sec:algorithm}

Our proposed algorithm for restoring barrier coverage works in two phases, an initialization phase and a maintenance phase. The initialization phase is executed once at the beginning to create/initialize a barrier and set up some data structures. The maintenance phase is invoked whenever node failure occurs. 

\subsection{Initialization Phase}

In the initialization phase, breadth first search technique is used to find a minimum hop count path between $p_L$ and $p_R$ in the intersection graph. All nodes on the path found are marked as {\em barrier nodes}. In Figure \ref{fig:graphical_representation}(a), the nodes on the  path marked by the dotted line are the {\em barrier nodes}. The remaining nodes are treated as {\em non-barrier nodes}. The status of a non-barrier node can change to barrier node if the barrier is modified after node failure and the non-barrier node participates to recreate the barrier.

\subsection{Maintenance Phase}

In the maintenance phase, on failure of a barrier node, it is first checked if an alternate barrier exists or not. 
In particular, we are searching existence of path between the predecessor and the successor of the failed node $s_f$. 
The alternate path is determined using  graph search method discussed in the following subsection. 
If no such path is found, then the algorithm attempts to replace the failed node by moving few sensors to new locations. The scheme for moving the sensors to reestablish the barrier is discussed next to alternate path determination subsection. Note that no maintenance needs to be done if a non-barrier  node fails, as the barrier is not affected by the failure.

\subsubsection{Determination of Alternate Path}
\label{sec:alternate_path}

In this sub-phase, we use breadth first search technique to find a minimum hop distance alternate path between the predecessor $s_{pre}$ and the successor $s_{suc}$ of a failed node $s_f$ in the intersection graph. Thereafter, merge the discovered path with the old barrier to find modified barrier.

\subsubsection{Shifting Sensors to Reestablish Barrier}
\label{sec:sensor_move}

If no alternate path is found between $s_{pre}$ and $s_{suc}$ in the intersection graph, the barrier cannot be repaired by our alternate path phase. This can be due to non-existence of alternate path between $s_{pre}$ and $s_{suc}$ in the intersection graph after the failure. In this case, the barrier coverage maintenance algorithm tries to shift sensors to new positions to reestablish the barrier. The neighboring barrier sensors should be moved such that the existing barrier is not compromised. While there can be many different approaches for shifting the sensors, in this paper we propose minimum displacement bipartite matching with cascaded node shifting algorithm ({\em MDBMCNS}) to repair the barrier. It will ensure maximum displacement capacity constraint of individual sensor as well as total displacement made by the sensors together is minimum. 

In {\em MDBMCNS} scheme, the active sensors together forms a set of vertices called sensor vertices represented by $L=\{s_1,s_2, \ldots s_N\} \setminus s_f $ whereas the positions of sensors on the barrier including the failed sensor forms another set of vertices called position vertices represented by $R=\{p_{i_i},p_{i_2}, \ldots p_{i_k} \}$ of the bipartite graph. There is an edge between a vertex $s_i \in L$ to a vertex $p_j \in R$ if $s_i$ has enough remaining energy to relocate itself to the position $p_j$. Once the existing network modeled as a bipartite graph, all edges of the form $(s_i, p_j)$ is assigned weights equal to euclidean distance between sensor $s_i$'s current position $p_i$ to $p_j$. Once the weight assignment of the bipartite graph is over, minimum cost based maximum matching of the graph is determined using Hungarian algorithm \cite{kuhn:1955}. After finding the matched edges between $L$ and $R$, if the number of matched edges is less than $|R|$ ( means number of active sensors are not sufficient to position themselves to the location of existing barrier sensors including failed sensors) then recovery is not possible. Otherwise, all the sensors incident with those matched edges are relocated to the corresponding positions to reform the old barrier. An example of sensor network of Figure \ref{fig:graphical_representation} after the failure of barrier sensor $s_6$ and its corresponding  bipartite graph are shown in Figure \ref{fig:bipartite}. In this example, $L=\{s_1,s_2, \ldots s_{12}\} \setminus s_6 $  and $R=\{ p_1,p_3, p_5, p_6, p_8, p_9, p_{10}, p_{12} \}$ and we assume that existing sensors have enough energy to replace any barrier neighbors. Otherwise, the corresponding edges are removed from the bipartite graph. For example, if sensor $s_8$ in Figure \ref{fig:bipartite}(b) does not have enough energy to relocate to positions $p_6$ and $p_9$ then the edges $(s_8,p_6)$ and $(s_8,p_9)$ must be dropped from the bipartite graph to ensure the limited energy constraint of the sensor. There is no edge between $s_1$ and $p_5$ because they are not in their communication range. This modelling not only fills a vacant barrier node position by a non-barrier neighbor but also allows cascaded movement of the barrier nodes until a non-barrier neighbor node is found.


\begin{figure}
\centering
\includegraphics[width=12cm]{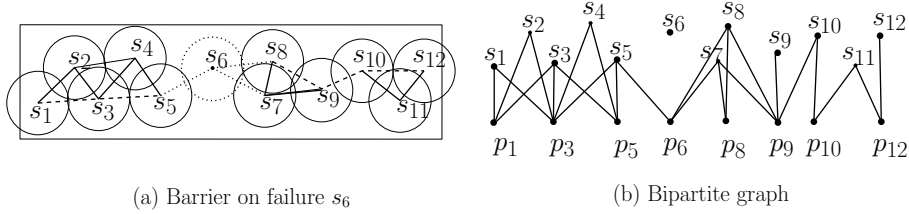}
\caption{Sensor network and its bipartite graph after node $s_6$ failure} \label{fig:bipartite}
\end{figure}

\subsection{Complexity Analysis}
\label{sec:analysis}
In this section, we establish the time complexity of different phases of the above protocol.

\begin{theorem} \label{th:1}
Initialization phase of barrier maintenance algorithm runs in $O(E)$ time, where $E$ is the number of edges in the intersection graph.
\end{theorem}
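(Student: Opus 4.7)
The plan is to analyze each operation performed during the initialization phase and bound its cost by a multiple of the number of edges $E$ in the intersection graph. The initialization does essentially two things: it runs a breadth-first search from $p_L$ until $p_R$ is discovered, and then it traces back along the predecessor pointers to mark every vertex on the resulting shortest path as a barrier node.

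First I would invoke the standard analysis of breadth-first search, which runs in time $O(V+E)$ on a graph with $V$ vertices and $E$ edges, since each vertex is enqueued at most once and each edge is examined a constant number of times when scanning adjacency lists. Second, I would bound the backtracking step that marks barrier nodes: the discovered path has length at most $V$, so marking costs $O(V)$. Summing, the initialization runs in $O(V+E)$ time in the worst case.

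The remaining task, and the one slightly nontrivial step, is to argue that under the hypotheses of the initialization phase this equals $O(E)$. I would observe that BFS only ever examines vertices and edges inside the connected component $C$ containing $p_L$; isolated vertices and vertices in other components of the intersection graph are never touched. For the initialization phase to produce a barrier at all, $p_R$ must lie in $C$, so $C$ contains at least one $p_L$--$p_R$ path and is therefore a connected subgraph on some $V_C$ vertices with $E_C \ge V_C - 1$ edges. Hence $V_C = O(E_C)$, so the actual work done by BFS and the subsequent path-marking is $O(V_C + E_C) = O(E_C) \le O(E)$.

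The only genuine obstacle is the edge case in which no $p_L$--$p_R$ path exists, so that the "barrier exists" precondition fails; in that case BFS still runs in $O(V_C + E_C) \le O(V+E)$ but one cannot collapse $V$ into $E$ because $C$ could be a single isolated vertex $\{p_L\}$. I would handle this by noting that the theorem implicitly assumes a valid initial deployment (an initial barrier exists, as stated in Section~\ref{sec:background}), which forces $p_R \in C$ and hence $V_C \le E_C + 1$, closing the argument.
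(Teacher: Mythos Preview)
Your proposal is correct and follows the same approach as the paper: both rely on breadth-first search to find the $p_L$--$p_R$ path and cite its linear running time. Your argument is in fact more careful than the paper's one-line proof, which simply asserts that BFS ``runs in $O(E)$ time'' without justifying the absorption of the $O(V)$ term; your connected-component observation that $V_C \le E_C + 1$ under the barrier-exists assumption is exactly the missing step.
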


\begin{proof}
In this phase, breadth first search technique is used to find a path between $p_L$ and $p_R$ and establish the barrier. It runs in $O(E)$ time.
\end{proof}

\begin{lemma} \label{lem:1}
  Alternate path finding algorithm runs in $O(E)$ time.
\end{lemma}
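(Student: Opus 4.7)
The plan is to unpack the two ingredients of the alternate-path phase described in Section~\ref{sec:alternate_path}: (i) a breadth-first search from $s_{pre}$ to $s_{suc}$ in the intersection graph with $s_f$ removed, and (ii) the merge of the newly discovered sub-path with the old barrier. For (i), I would invoke the textbook bound that BFS on a graph with $V$ vertices and $E$ edges runs in $O(V+E)$ time, since each vertex is enqueued at most once and each edge inspected a constant number of times. For (ii), I would observe that merging amounts to replacing the segment of the old barrier between $s_{pre}$ and $s_{suc}$ by the newly found path; since both the old barrier and the new sub-path are simple paths in the intersection graph, each of length at most $V$, the merge takes $O(V)$ time (e.g., by walking the two paths and splicing).

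Putting the two parts together, the total running time is $O(V+E)$. To rewrite this as $O(E)$ as claimed, I would argue that we may restrict attention to the connected component of the intersection graph containing $p_L$ and $p_R$ (isolated vertices or disconnected components cannot supply an alternate path and can be ignored by the BFS); in any such non-trivial component we have $V \le E+1$, so $V+E = O(E)$. Equivalently, a standard implementation of BFS using adjacency lists touches only reachable vertices, and each such vertex is discovered through an incident edge, giving an $O(E)$ bound directly.

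The main obstacle, and really the only subtle point, is precisely this reduction from the usual $O(V+E)$ bound to $O(E)$: one has to be explicit that isolated non-barrier vertices contribute nothing and can be excluded from the accounting, or that the adjacency-list implementation charges each visited vertex to an incident edge. Once this is noted, the rest is immediate from the standard analysis of BFS and from the observation that the merge step is linear in the combined length of the two paths involved.
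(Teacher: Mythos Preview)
Your proposal is correct and follows essentially the same approach as the paper: both argue that the alternate-path phase is a single BFS from $s_{pre}$ (or $s_{suc}$), hence linear in the number of edges traversed. You are in fact more careful than the paper, which simply asserts the $O(E)$ bound without addressing the $O(V+E)$ versus $O(E)$ distinction or the cost of the merge step; your reduction via the connected component (or the adjacency-list charging argument) fills a gap the paper leaves implicit.
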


\begin{proof}
This is easy to follow because we are using a breadth first search which runs in linear to the number of edge traversed. In our alternate path determination phase, we are using a breadth first search from $s_{pre}$ or $s_{suc}$ . In the worst case all the edges are traversed by it. Therefore, the time complexity is $O(E)$.
\end{proof}


\begin{lemma} \label{lem:2}
Shifting  algorithm for repairing the broken barrier runs in $O(N^{3})$ time.
\end{lemma}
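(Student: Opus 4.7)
My plan is to account for every step that the \emph{shifting} algorithm (the \emph{MDBMCNS} procedure of Section~\ref{sec:sensor_move}) performs, not just the matching subroutine, and to argue that the total cost is bounded by the most expensive step.

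First I would enumerate the phases of the shifting algorithm as it is described in Section~\ref{sec:sensor_move}: (i) construction of the vertex set $L = \{s_1,\dots,s_N\}\setminus\{s_f\}$ of active sensors and the vertex set $R$ of barrier positions to be filled; (ii) construction of the bipartite edge set, where for each pair $(s_i, p_j) \in L\times R$ we test whether $s_i$ has enough residual energy (equivalently, enough remaining displacement capacity $d_i$) to reach $p_j$; (iii) weight assignment on each surviving edge as the Euclidean distance $\lVert p_i - p_j\rVert$; (iv) computation of a minimum-cost maximum matching via the Hungarian algorithm; (v) the actual physical relocation of each sensor incident to a matched edge to its assigned position.

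Next I would bound each phase in terms of $N$. Since $|L| \le N$ and $|R| \le N$ (the barrier contains at most $N$ positions), phases (i)–(iii) each require at most $O(N^{2})$ work: a constant-time test and a single distance computation per pair. Phase (v) involves dispatching at most $|R| \le N$ move commands, which is $O(N)$. For phase (iv), the bipartite graph has $O(N)$ vertices on each side and $O(N^{2})$ weighted edges, so applying the Hungarian algorithm of Kuhn~\cite{kuhn:1955} gives a running time of $O(N^{3})$.

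Summing the phases yields $O(N^{2}) + O(N^{2}) + O(N^{2}) + O(N^{3}) + O(N) = O(N^{3})$, establishing the bound. The only nontrivial step in the argument is recognizing that phase (iv) dominates; the remainder is bookkeeping. I do not foresee any real obstacle: the main thing to be careful about is justifying that the bipartite graph can indeed be built in $O(N^{2})$ time (a single pass over $L\times R$ suffices, since the displacement–capacity and edge-existence tests are $O(1)$ each) so that this preprocessing does not secretly exceed the cubic bound of the Hungarian step.
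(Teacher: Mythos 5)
Your proposal is correct and rests on the same key fact as the paper's own proof, namely that the Hungarian algorithm on a bipartite graph with $O(N)$ vertices per side runs in $O(N^{3})$; the paper states only this and omits the bookkeeping. Your additional accounting that graph construction, weight assignment, and relocation cost at most $O(N^{2})$ is a harmless (indeed slightly more rigorous) elaboration, not a different route.
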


\begin{proof}
Hungarian algorithm \cite{kuhn:1955} is used to determine the minimum cost maximum matching of a weighted bipartite graph which runs in $O(N^{3})$.
\end{proof}

\begin{theorem}\label{th:2}
The overall time complexity of our restoring phase algorithm is $O(N^3)$.
\end{theorem}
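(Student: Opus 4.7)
The plan is to combine the two lemmas that quantify the cost of the two sub-phases of the maintenance/restoring routine and observe that the dominant term determines the overall complexity. First, I would recall that upon a barrier node failure the algorithm proceeds sequentially: it invokes the alternate path search described in Section~\ref{sec:alternate_path}, and only if this search fails does it fall back on the shifting procedure \emph{MDBMCNS} described in Section~\ref{sec:sensor_move}. Hence the worst-case running time is at most the sum of the two phase costs.

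Next I would cite Lemma~\ref{lem:1} to bound the alternate path sub-phase by $O(E)$, where $E$ is the number of edges in the intersection graph, and Lemma~\ref{lem:2} to bound the shifting sub-phase by $O(N^{3})$. To express both bounds in terms of $N$ only, I would use the fact that the intersection graph has at most $\binom{N}{2}$ edges, so $E = O(N^{2})$. Summing the two contributions gives $O(N^{2}) + O(N^{3}) = O(N^{3})$, which is the claimed bound.

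The only subtlety worth being careful about is that the two phases run one after the other rather than being chosen exclusively, so the additive argument is the right one; still, since $N^{3}$ dominates $N^{2}$, the Hungarian matching step is the bottleneck and the overall complexity is $O(N^{3})$. I do not expect any real obstacle here, as this is essentially a bookkeeping argument that stitches together Lemmas~\ref{lem:1} and~\ref{lem:2}.
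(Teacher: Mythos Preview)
Your proposal is correct and mirrors the paper's own proof almost exactly: the paper also sums the two sub-phase costs from Lemmas~\ref{lem:1} and~\ref{lem:2} to get $O(E+N^{3})=O(N^{3})$. The only difference is that you make the step $E=O(N^{2})$ explicit, whereas the paper leaves it implicit.
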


\begin{proof}
Restoring phase has two sub phases alternate path finding sub-phase and sensor shifting sub-phase. 
In the worst case both of them are executed. Therefore, the total time requirement is $O(E + N^3)=O(N^3)$. 
\end{proof}


\subsection{Energy and Multiple Failure Issues}

In practice, it is not possible to ensure barrier reconstruction for any number of node failure using limited battery capacity. Therefore, to extend the life-time of the barrier coverage, barrier nodes are made static (immobile) when its remaining energy is less than some threshold value. After every relocation remaining displacement capacity of a node is reduced by the displacement made by the node. During node shifting phase the active nodes that are already static never participate in the formation of the bipartite graph. Similarly, during the formation of bipartite graph if the weight of an edge between $(s_i, p_j)$ is greater than displacement capacity (proportional to its remaining energy) of $s_i$ then we remove the edge. If multiple nodes fail simultaneously then the alternate path is determined between the predecessor of the left most failed node and the successor of right most failed node among the barrier nodes. If alternate path is not found between predecessor and successor of the failed node/s then our bipartite matching based node shifting algorithm tries to refill multiple vacant positions created by failed barrier nodes by shifting neighboring barrier as well as non-barrier nodes.

\section{Distributed Algorithm (DBCRA) }
\label{sec:dist_implementation}


In this section, we discuss a localized algorithm and its pseudo-code. We assume that each sensor node knows its position, neighbors positions and the positions of left and right boundary. Each node maintains the list of variables in table \ref{tab:variable}. We also assume that sensor fails one after another. It means second fails after the recovery of the first failure. In the rest of this paper, we refer a sensor node by $p$ and a variable $x$ of node $p$ as $p.x$. Following two subsections describe the localized implementation of initial phase and maintenance phase respectively.

\begin{table}[ht]

\processtable{Variables at each node $p$ \label{tab:variable}}
{
\begin{tabular}{|l|l|} 
\hline
  {\em id} & The identifier of the node\\ 
\hline
 {\em isOnBarrier} & Boolean flag indicating whether the node is on the barrier \\ 
\hline
 {\em isRecNode} & Boolean flag indicating whether the node is a recovery node \\ 
\hline
  {\em barNeighbor} & List of barrier neighbors  \\ 
\hline
  {\em nonBarNeighbor} & List of non barrier neighbors \\ 
\hline
  {\em pathLength} & Distance along barrier to the closest {\em non-barrier} node \\ 
\hline
{\em resEnergy} & Residual energy of the node \\ 
\hline
  {\em pre} & The predecessor node on the barrier path \\ 
\hline
 {\em suc} & The successor node on the barrier path \\ 
\hline
 {\em recNode} & If {\em isOnBarrier} then it stores the identifier \\
		    & of its recovery node. \\ 
\hline
{\em recSet}  & If {\em isRecNode} then it stores a list of tuples of the  \\
                      & form $\tuple{q, pre, suc}$ where $q$ is a barrier node for which \\
                      & the node is a recovery node, and $pre$ and $suc$ are the  \\
                      & predecessor and successor of $q$. \\ 
\hline
\end{tabular}}{}
\end{table}


\subsection{Initialization Phase}

In the initialization phase, distributed breadth first search is used to find a path between $p_L$ and $p_R$ in the intersection graph. All nodes on the path are marked as {\em barrier nodes}. In Figures \ref{fig:barrier}(a) and (b), the nodes on the  path marked by the dark lines are the {\em barrier nodes}. The remaining nodes are treated as  {\em non-barrier nodes}. Each barrier node $s_i$ keeps track of its \emph{predecessor} and \emph{successor} on the barrier, which are barrier nodes to the immediate left and right of $s_i$ respectively. It also maintains a {\em pathLength}, which keeps the path length along the barrier to its closest non-barrier neighbor. Each barrier node  $s_i$ selects one of its neighbors as a {\em recovery node} (this may be a barrier or a non-barrier node) and sends to it the identifiers and the positions of its predecessor and successor nodes. The recovery node corresponding to node $s_i$ is responsible for detecting the failure of $s_i$ and initiating the repair of the barrier. If $s_i$ has non-barrier neighbors with sufficient energy level to relocate itself to $p_i$, then the closest among them is selected as the recovery node $s_{rec}$ and the {\em pathLength} value of $s_i$ is set to $|p_i - p_{rec}|$. For example, in Figure \ref{fig:barrier}(a), $s_f$ selects $s_{rec}$ as a {\em recovery node}. Otherwise, either the predecessor or successor of $s_i$ is selected as recovery node based on the $pathLength$ and remaining energy of them. If ($|p_i-p_{pre}| + s_{pre}.pathLength$)  $\le$  ($|p_i-p_{suc}| + s_{suc}.pathLength$) then $s_{pre}$ is selected as recovery node of $s_i$; otherwise $s_{suc}$. The {\em pathLength} of $s_i$ is set to $|p_i-p_{rec}| + s_{rec}.pathLength$. 

In Figure \ref{fig:barrier}(b), $s_f$ does not have any non-barrier neighbor but the closest non-barrier node of $s_{pre}$ is $s_6$ whereas closest non-barrier node of $s_{suc}$ is $s_{10}$. Therefore, $s_f$ will select $s_{suc}$ as its recovery node, since  $(|p_f-p_{suc}| + |p_{suc}-p_{10}|) \le (|p_f-p_{pre}| + |p_{pre}-p_5| + | p_5-p_6|)$. The recovery node $s_{rec}$ maintains the position $p_i$ and identifier of $s_i$, as well as the positions and identifiers of the predecessor and the successor of $s_i$. Note that a recovery node $s_{rec}$ can be the recovery node for more than one barrier node, and will maintain the above data for each of them.

\begin{figure}
\centering
\includegraphics[width=11cm]{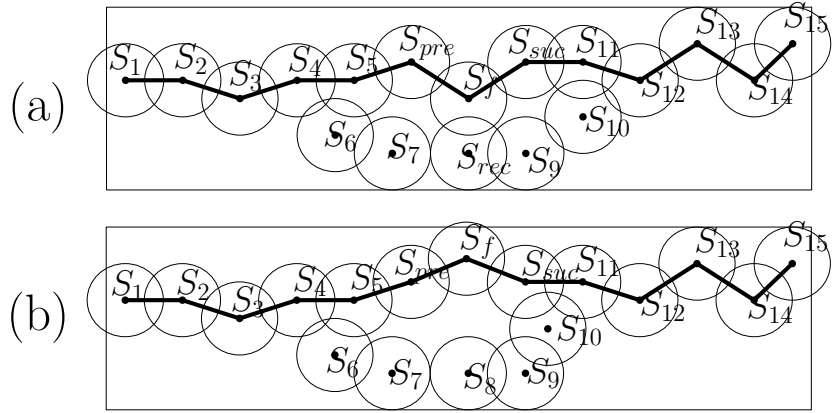}
\caption{Example of barrier nodes and recovery node}
\label{fig:barrier}
\end{figure}

As discussed earlier, the initialization phase finds the initial barrier using a distributed BFS, marks the barrier nodes, and subsequently sets up the recovery node for each of the barrier nodes. Since distributed BFS algorithms are well-known, we assume that the initial barrier is set up already and the $isOnBarrier$ variable is properly initialized at each node. In addition, if $isOnBarrier = true$ for a node, then we assume that its $pre$ and $suc$ variables are also properly initialized. We just show the pseudo code for setting up the recovery node for each barrier node.

Three types of messages are used to set up the recovery nodes:

\begin{itemize}

     \item {\em ReqNbRec(q)}: A request message sent by a barrier-node  to its predecessor and successor to find the path length of the closest non barrier node. The parameter $q$ is the identifier of the message originator.

     \item {\em RepNbRec(q,d)}: A  reply message returned in response to a {\em ReqNbRec(q)} message. The two parameters $q$ and $d$ indicate that the destination for the reply message is $q$ and the path length to the discovered closest non-barrier node is $d$.

     \item {\em SetRec(p,q)} : A message sent by a barrier-node to its recovery node. the parameter $p$ contains the identity and the position of its predecessor and the parameter $q$ contains the identity and the position of its successor.

\end{itemize}

The algorithms for the initialization phase are shown in Algorithm \ref{algo:initial1} and Algorithm \ref{algo:initial2}. We assume that the initial barrier construction algorithm identifies the barrier nodes and sets predecessor and successor nodes of each individual barrier node on the current barrier. We also assume the initial energy level of all the sensors are same and spent according to the displacement they make. Algorithm \ref{algo:initial1} shows the pseudo-code executed by each barrier node to find its recovery node. If there is a non-barrier neighbor then the barrier node sends a {\em SetRec} message to it, otherwise chooses one of its barrier neighbors by sending {\em ReqNbRec} messages to them.  Algorithm \ref{algo:initial2} shows the messages handling subroutines of the initialization phase.


 \begin{algorithm}

 {\bf Initial Values} \\

 $pathLength = \infty$ ;  
 $barNeighbor, nonBarNeighbor, recSet = \emptyset$

\vspace{.3cm}

\SetKwFunction{KwFn}{ {\bf Barrier Node $p$ : After completion of initial barrier construction\vspace{8pt}  } }
\KwFn

	  \tcc{Initialize phase for barrier nodes to set their recovery node}

     	  $p.nonBarNeighbor = $ non-barrier neighbors of $p$ with enough energy for relocating to $p$ \;
	  $p.barNeighbor = $ barrier neighbors of $p$\;
	  $l = p.pre$\;
	  $r = p.suc$\;

	  \uIf{ $|p.nonBarNeighbor| > 0 $}
	    {
	    $t$= Closest node among $p.nonBarNeighbor$ \;
	    
	    $p.pathLength = distance(p,t)$;  \tcc{ Eucledian distance between $p$ and $t$ } \ 
	    $p.recNode = t$ \;
	    
	    send {\em SetRec(l,r) } to $t$;  \tcc{ Inform the recNode the positions and identity of $l$ \& $r$ }
	    }
	  \Else
	    {
	      \tcc{Send request to $l$ \& $r$ to know distance of their recNodes}
	      send {\em ReqNbRec(p)} to $l$  \;
	      send {\em ReqNbRec(p)} to $r$  \;
	    }

\caption{Localized Barrier Maintenance Algorithm: Initialization Phase}
\label{algo:initial1}
\end{algorithm}

\begin{algorithm}

 \vspace*{.6cm}

 \SetKwFunction{KwFn}{ {\bf Upon receiving a {\em SetRec( pred, succ )} message from node $r$} \vspace{8pt} }
 \KwFn
    
  $p.recSet = p.recSet \cup  \{  \tuple{r, pred, succ}  \} $   \;

\vspace*{.6cm}
 \SetKwFunction{KwFn}{ {\bf Upon Receiving a {\em ReqNbRec(q)}  message from node $p.suc$} \vspace{8pt} }
 \KwFn

    \If {$p.resEnergy < $ required energy to move to the location of $p.suc$}
    {

	send {\em RepNbRec(q,$\infty$)} to $p.suc $\;
	\KwRet
    }
    
  $p.nonBarNeighbor$ =  non-barrier neighbors of $p$ with enough energy for relocating to $p$ \;
  
  \uIf { $|p.nonBarNeighbor| > 0 $}
    {
	  $t$= Closest node among $p.nonBarNeighbor $\;
	  $d=distance(p,t) + distance(p,p.suc) $\;
	  send {\em RepNbRec(q,d)} to $p.suc $\;
    }
  \uElseIf{ $|p.nonBarNeighbor| = 0  \wedge  p.recNode = p.suc$ }
    {
	  send {\em ReqNbRec(q)} to $p.pre$ \;
    }
  \Else
    {
	  $d=p.pathLength + distance(p,p.suc) $\; 
          send {\em RepNbRec(q,d)} to $p.suc$\;
    }

\vspace*{.6cm}

\SetKwFunction{KwFn} { {\bf Upon Receiving a {\em RepNbRec(q,d) }  message from $p.pre$} \vspace{8pt} }
\KwFn

    \uIf{ $q=p$ }
        {
	\If{ $p.pathLength > d$ }
	      {
		  $p.pathLength = d$\;
		  $p.recNode =  p.pre$\;
                send {\em SetRec( p.pre, p.suc  ) } to $p.pre$\;
              }
        }
     \Else
       {
	   $p.pathLength = d$\;
	   $d=d + distance(p,p.suc) $\; 
	   {\em p.recNode} =  {\em p.pre} \;
	   send {\em RepNbRec(q,d)} to $p.suc$\;
       }

\tcc{ Similar types of actions are carried out when $p$ receives {\em ReqNbRec(q)} message from $p.pre$ and {\em RepNbRec(q,d)} message from  $p.suc$ ($suc$ is replaced by $pre$ and vice versa).}

 \caption{ Handling of different types of messages by a node $p$ in the Initialization Phase}
 \label{algo:initial2}
 \end{algorithm}

\subsection{Maintenance Phase}

In the maintenance phase, on failure of a barrier node, it is first checked if an alternate barrier exists or not. The recovery node $s_{rec}$ of the failed node $s_f$ detects the failure of $s_f$, and attempts to discover an alternate path between the predecessor and the successor of $s_f$. The alternate path is determined using  a localize heuristic search method which is discussed  in the following subsection \ref{subsec:alternate_path_localize}.  If no such path is found within a given maximum hop count, then the algorithm attempts to replace the failed node by moving few neighboring sensors to new locations. 

Note that no maintenance needs to be done if a non-barrier and non-recovery node fails, as the barrier is not affected by the failure. If a non-barrier recovery node $s_{rec}$ fails, each barrier node for which $s_{rec}$ is the recovery node detects the failure, and restarts the process of finding a new recovery node using the scheme described in the description of the initialization phase. The $pathLength$ value of each such node may also change, which is then propagated through the predecessor and the successor of the node, as that value may have been used more than one hop away to set the $pathLength$ value of nodes. This may in turn change the recovery node for other barrier nodes.

After the failure of a {\em barrier node} $m$ and $m \in$ {\em recSet(p)}, $p$ is responsible to repair the barrier. The detail algorithm for handling of node failure in maintenance phase is shown in Algorithm \ref{algo:maintenance}. In the maintenance phase {\em Alternate-Path($d$)} procedure is called to determine a path from a given node to $d$. To determine the path it uses the following message.

\begin{itemize}
\item {\em $Tok(route,d,k)$} message to find a path for node $d$ within $k$ hops. This message is used for both route discovery and route reply. In route discovery mode the parameter $route$ is initialized with $``disc"$, whereas in route reply mode the parameter is initialized with $``found"$.
\end{itemize}


\begin{algorithm}[!h]

\tcc{ p is a recovery node of m}
\uIf{ m $\in$  p.recSet }
{

\tcc{ p is a non barrier recovery node}
\uIf { p $\neq$ m.pre $\wedge$  p $\neq$ m.suc } 
     { alternate-barrier = concatenate( {\em Alternate-Path(m.pre), Alternate-Path(m.suc) } ) \; }

\tcc{ p is a predecessor of m}
\uElseIf{  p= m.pre }
     { alternate-barrier= {\em Alternate-Path(m.suc) } \;}

\tcc{ p is a successor of m}
\Else{ \tcc{ {\em p= m.suc}  }
      alternate-barrier = {\em Alternate-Path(m.pre) } \;}

\tcc{ when alternate path is not found within a given maximum depth update the position and energy}
\If {  $\lnot$ {\em alternate-barrier} }
    {
	\uIf{ p.resEnergy $\ge$  energy require to relocate to m}
	{
	  move $p$ to replace $m$ \;
	  update $resEnergy$ of $p$ \;
	}
	\Else{
	  recovery is not possible \;
	  \KwRet \;	    
	 }
    }
}

\tcc{On failure of a recovery node determine new recovery node}
\uElseIf{p.recNode= m }
{
  starts the initialization phase of node $p$ \;
}

\tcc{On failure of other node}
\Else
{
  do nothing \;
}

\caption{Upon Detecting Failure of Node $m$ by Node $p$  }

\label{algo:maintenance}
\end{algorithm}


\subsubsection{Determination of Alternate Path }
\label{subsec:alternate_path_localize}
In this subsection, we discuss our proposed backtracking search based algorithm to find an alternate path between the predecessor $s_{pre}$ and the successor $s_{suc}$ of a failed node $s_f$ in the intersection graph. The propose algorithm, called  {\em Modified Limited Depth First Search} (\emph{MLDFS}), is a combination of limited depth first search (\emph{LDFS}) (a depth first search limited to a given maximum depth) and Finn's greedy routing (\emph{FGR}) algorithm \cite{finn:1987}. The limited depth first search is used to limit the hop count of the path found, while the greedy routing strategy helps to keep the new path geographically close to the part of the path that is lost due to the failure of a barrier node. After the failure of a barrier node $s_f$, \emph{MLDFS} is used to find an alternate path between the $s_{pre}$ and $s_{suc}$. \emph{MLDFS} first selects a neighbor based on \emph{FGR}. In particular, it selects an unexplored neighbor $s_x$ which is closest to the destination $s_d$. An example of this neighbor selection using the \emph{FGR} algorithm is shown in Figure \ref{fig:neighbor}(a) where node $s_{rec}$ wants to find a path to node $s_d$. $s_{rec}$ selects the neighbor $s_a$ first instead of $s_b$, because $s_a$ is closer to $s_d$ than $s_b$. If there is no such path found through $s_a$ then it backtracks and tries through $s_b$. A complete alternate barrier of Figure \ref{fig:graphical_representation}(a) after the failure of a barrier node $s_f=s_8$ is shown in Figure  \ref{fig:neighbor}(b) where $s_{pre}= s_6$ and $s_{suc}=s_9$.

\begin{figure}
\centering
\includegraphics[width=12cm]{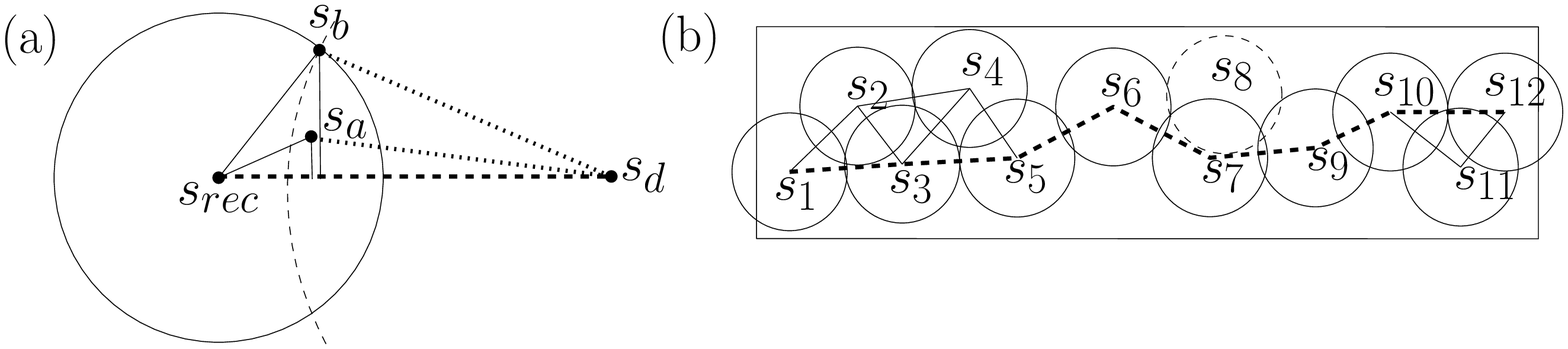}
\caption{Neighbor selection and alternate path determination}
\label{fig:neighbor}
\end{figure}

If the recovery node $s_{rec}$ is a non-barrier node, then it finds two individual paths to predecessor $s_{pre}$ and the successor $s_{suc}$ of the failed node $s_f$, and then combines them with the rest of the broken barrier (from $p_L$ to $s_{pre}$ and from $s_{suc}$ to $p_R$) to form an alternate barrier. For example in Figure \ref{fig:recovery}(a), on failure of node $s_f$, $s_{rec}$ (a non-barrier recovery node) finds two paths from $s_{rec}$ to $s_{pre}$ and $s_{rec}$ to $s_{suc}$ using Algorithm \ref{algo:alternatePath} (\emph{MLDFS}), and merges them to the earlier barrier to construct the alternate barrier $\{s_1, s_2, s_3, s_4, s_5, s_6, s_7, s_{rec}, s_9, s_{10}, s_{11}, s_{12}, s_{13}, s_{14}, s_{15} \}$. On the other hand, if $s_{rec}$ is itself a barrier node, it uses algorithm \emph{MLDFS} to find a path to the other barrier neighbor of $s_f$. In Figure \ref{fig:recovery}(b), $s_{suc}$ is a barrier recovery node of the failed sensor $s_f$. $s_{suc}$ uses \emph{MLDFS} to find a path to the other barrier neighbor $s_{pre}$ of $s_f$, and combines it with the previous barrier to form an alternate barrier $\{s_1, s_2, s_3, s_4, s_5, s_6, s_7, s_8, s_9, s_{10}, s_{suc}, s_{11}, s_{12}, s_{13}, s_{14}, s_{15} \}$.

The detail algorithm for determining alternate path is shown in Algorithm \ref{algo:alternatePath}. The recovery node of the failed barrier is the initiator. It initiates the path discovery process by setting route variable to $``disc"$ and sending {\em Tok(route,d, k)} message to one of its neighbor. The neighbor selection process is based on the distance to the destination node $d$'s position \cite{finn:1987}. The route discovery process is divided into four cases depending on the receiver of the {\em Tok}. First, the receiver is the final destination.  Second, the receiver is an intermediate node which has at least one neighbor except from where it receives the message but the token is not yet forwarded to it. Third,  once the token is already reached its limited depth then the receiving node returns the message to the sender. Finally, there does not exist a neighbor of the receiver to which the token message forwarding is still pending then it declares failure of route discovery. Once the destination node $d$ receives the route discovery message it sets the $route$ variable to $``found"$ and returns back to the node from where it receives. Similarly, once an intermediate node receives a route $``found"$ message, it forwards the message to the node from where it first received (father). This process continues until it reaches to its originator.  

\begin{figure}[!h]
\centering
\includegraphics[width=12cm]{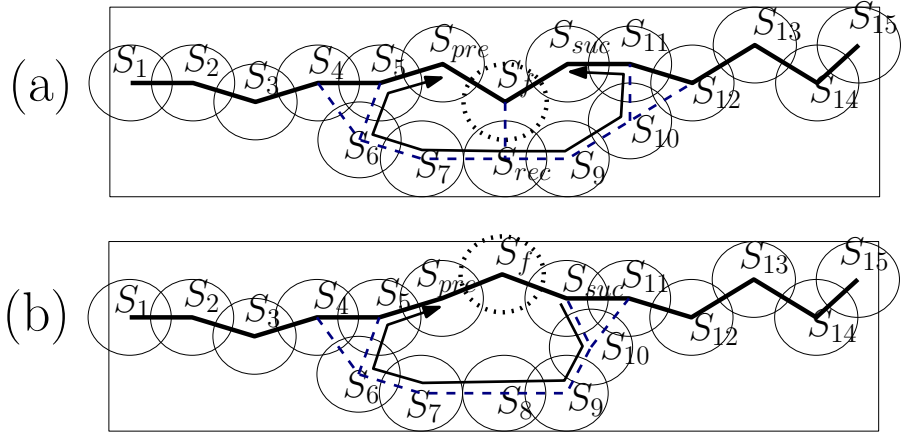}
\caption{Discovery of alternate path} \label{fig:recovery}
\end{figure}

\begin{algorithm}[!ht]

{\bf var} \\

\begin{tabular}{l c l l l}
    {\em neighbor }  &:& node set      & $\quad$ init  & {\em barNeighbor} $\cup$  {\em non-barNeighbor} \\
    {\em used(q) }   &:& boolean array & $\quad$ init  & false for each $q$ $\in$ {\em neighbor(p)} \\
    {\em father }    &:& node          & $\quad$ init  & udef
\end{tabular}

\vspace{0.5cm}

\SetKwFunction{KwFn}{{\bf Initiator Node $p$ :} \vspace{8pt} }
\KwFn

    {\em p.father = p} \;
    
    choose $q$ $\in$ {\em p.neighbor}  \;   
    \tcc{In this algorithm $q$ is chosen such that $\overline{qd}$ is minimum among all neighbors of $p$ \cite{finn:1987}}
    {\em p.used(q)}  = true \;
    send {\em Tok(``disc",d, k)} to $q$ \;

\vspace{.5cm}

\SetKwFunction{KwFn}{{\bf Any Node $p$ : Upon Receiving a {\em Tok(route,d,k) } message from $m$ } \vspace{8pt}}
\KwFn

\If{ p.father= udef}{ {\em p.father} = $m$ \; }

\uIf{ route = ``disc"   $\wedge$  (d = p)  $\wedge$ (k $\ge$ 0) }
	{    \uIf{ p.father= p.pre $\vee$ p.father = p.suc}{ isOnBarrier=false; \hspace{.5cm} p.pre=udef; \hspace{.5cm} p.suc=udef;}
	     \Else
		{ 
		  update {\em p.pre} or {\em p.suc} according to destination $d$ and $m$;
		}
	     send {\em Tok(``found", null,1)} to $m$ \;
	}

\uElseIf { route = ``disc" $\wedge$  k $>$ 0  $\wedge$  $\exists q \in$ p.neighbor $\backslash$ p.father : $\lnot$ p.used(q) }
{

  \uIf { p.father $\ne$ m  $\wedge$  $\lnot$ p.used(m)  }
   {
      $q=m$ \;
   }
  \Else
   {
      choose $q \in$ {\em neighbor(p)} $\backslash$ {\em p.father} : $\lnot$ {\em p.used(q)} $\wedge  \overline{qd} $ is minimum \;
   }

   {\em p.used(q)} = true \;
   send  {\em Tok(``disc",d,k-1) } to $q$ \;
}

\uElseIf { route = ``disc" $\wedge$  k $=$ 0 } 
{

  {\em p.used(p.father) }= true; \;
  send {\em Tok(``disc",d,k+1)} to $p.father$ \;

}

\uElseIf { route = ``disc"  $\wedge$  $\forall q \in$ p.neighbor : p.used(q)=true }
{ 
  path discovery failure \;  
}

\uElseIf { route = ``found" $\wedge$ p.father $\neq$ p }
	{
	     \uIf{ p.father= p.pre $\vee$ p.father = p.suc }{ isOnBarrier=false; \hspace{.5cm} p.pre = udef; \hspace{.5cm} p.suc=udef;}
	     \Else
	      { 
		update  {\em p.pre} or {\em p.suc} according to destination $d$ and $m$; 
	      }

	      send {\em Tok(``found",d,k+1)}  to  {\em p.father } \;
	}
	
\uElseIf { route = ``found"  $\wedge$  p.father=p }
 	 {	
	  path discovery success \; 
	}

\caption{ Discover {\em Alternate-Path } to Node $d$ within $k$ Hops ({\em MLDFS}) }
\label{algo:alternatePath}
\end{algorithm}

\subsubsection{Shifting Sensors to Reestablish Barrier}
\label{subsec:sensor_move_localize}
If no alternate path is found between $s_{pre}$ and $s_{suc}$ in the intersection graph, the barrier cannot be repaired by our alternate path phase. This can be due to two reasons, no alternate path actually exists, or there is a path but its length is too large. In both cases, the barrier coverage maintenance algorithm assumes that alternate path can't be found, and hence tries to shift sensors to new positions to reestablish the barrier. 
\FloatBarrier

We try to locally repair the barrier by trying to exactly recreate the barrier that was present before the failure. In the initialization phase recovery nodes of the barrier nodes are chosen such that through them the distance to non-barrier node is minimum. Therefore, the recovery node provides a shifting direction for every failed barrier node with minimum total displacement. In particular, the recovery node $s_{rec}$ of the failed node $s_f$ moves to the position of $s_f$ to breach the hole created by the failure of $s_f$. If $s_{rec}$ is a non-barrier node, the original barrier is recreated and the algorithm terminates. If $s_{rec}$ is itself a barrier node, moving $s_{rec}$ from its original position may now break the barrier again. Note that the state of the barrier after moving $s_{rec}$ is indistinguishable from the state in which the node $s_{rec}$ failed. Thus, if the barrier is broken due to movement of $s_{rec}$, it can be considered as the failure of $s_{rec}$. This is now detected by another recovery node $s_h$, which is a recovery node of $s_{rec}$, and triggers the same action by $s_h$. The process continues until the recovery node is a non-barrier node. The pseudo-code of the shifting part is included in Algorithm \ref{algo:maintenance}.



\subsection{Complexity Analysis}
\label{sec:analysis}
In this section, we establish the time and message complexity of different phase of the above distributed protocol.

\begin{theorem} \label{th:1}
The distributed initialization phase of barrier maintenance algorithm runs in $O(N)$ round time with $O(NE)$  message overhead.
\end{theorem}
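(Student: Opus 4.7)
The plan is to decompose the initialization phase into two subphases and bound each separately: (i) the distributed breadth-first search that identifies the initial barrier and sets each barrier node's $pre$ and $suc$ pointers, and (ii) the recovery-node assignment driven by the $ReqNbRec$/$RepNbRec$/$SetRec$ exchanges of Algorithms \ref{algo:initial1} and \ref{algo:initial2}. I will argue that the BFS subphase dominates the message bound while the second subphase dominates only up to a lower-order term, so the overall totals are $O(N)$ rounds and $O(NE)$ messages.

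For the BFS subphase I would cite the standard asynchronous distributed BFS bounds: $O(N)$ rounds (since the depth of the BFS tree is at most $N-1$) and $O(NE)$ messages in the worst case, because layer corrections can cause each edge to be traversed up to $O(N)$ times. This immediately gives the claimed $O(NE)$ message bound for the whole phase.

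For the recovery-node subphase, the argument I plan to make is structural. Each barrier node $p$ either (a) has a qualifying non-barrier neighbor, in which case it sends exactly one $SetRec$ message in a single round, or (b) triggers two $ReqNbRec$ tokens, one toward $pre$ and one toward $suc$. Walking through Algorithm \ref{algo:initial2}, I would observe that a $ReqNbRec(q)$ token walks monotonically along the barrier (from $p.suc$ it is forwarded only to $p.pre$ when the current node has no usable non-barrier neighbor and its own recovery direction matches) until it hits a barrier node with a non-barrier neighbor or exhausts the barrier, and the $RepNbRec(q,d)$ token retraces the same segment. Since the barrier has at most $N$ nodes, each token traverses at most $O(N)$ edges in $O(N)$ rounds, and the final $SetRec$ is one extra hop. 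Summing over the at most $N$ originators, this subphase contributes $O(N^2)$ messages and, because all originators run in parallel, $O(N)$ additional rounds.

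Combining the two subphases yields round time $O(N)+O(N)=O(N)$ and message overhead $O(NE)+O(N^2)=O(NE)$ (the second term being absorbed since $E\ge N-1$ for the connected intersection graph that supports a barrier). The main obstacle I anticipate is the second subphase: I must argue carefully that the $ReqNbRec$ recursion cannot revisit nodes or loop back across the failure point, which requires appealing to the ``$p.recNode = p.suc$'' guard in Algorithm \ref{algo:initial2} to ensure the token is forwarded in a single direction along the barrier, and to the assumption made at the start of the maintenance section that failures are handled one at a time so the barrier topology is stable throughout the initialization.
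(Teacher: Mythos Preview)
Your proposal is correct and follows essentially the same approach as the paper: decompose the initialization phase into the distributed BFS that builds the barrier and the subsequent recovery-node assignment, and bound each in turn. The paper's own proof is in fact much terser than yours---it simply asserts that the BFS plus the recovery-node determination together run in $O(N)$ rounds with $O(NE)$ messages, without the per-token walk analysis or the $O(N^2)\subseteq O(NE)$ absorption argument you supply---so your write-up is a more fleshed-out version of the same idea rather than a different route.
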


\begin{proof}
The initialization phase uses breadth first search to find a path between $p_L$ and $p_R$ and establish the barrier. After initial barrier construction, barrier nodes determine their closest non-barrier recovery node. Together, this phase runs in $O(N)$ round and with $O(NE)$  message. 
\end{proof}

\begin{lemma} \label{lem:1}
 Alternate path finding algorithm runs in $O(k)$ time and message overhead.
\end{lemma}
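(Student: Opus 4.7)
The plan is to analyze the lifetime of the single $Tok$ token issued by the initiator and to bound its forward-propagation cost and its reply cost separately. The key structural fact I would use is that the third argument of $Tok(route,d,k)$ is an explicit hop-budget.

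First, I would observe that the initiator sends $Tok(\text{``disc"},d,k)$; every forward hop in the ``disc'' branch of Algorithm~\ref{algo:alternatePath} decrements this counter by one, and as soon as the counter reaches $0$ the receiving node refuses to forward and bounces the token back to its $father$. This caps the DFS-depth reachable from the initiator at $k$, so any returned route has length at most $k$ edges. Next, I would trace the reply phase: once the destination $d$ is reached it emits $Tok(\text{``found"},\cdot,\cdot)$ to the node from which it received the discovery message, and each subsequent intermediate node forwards the reply only along its $father$ pointer (set exactly once, upon receipt of the first $disc$ message). The father-chain from $d$ back to the initiator has at most $k$ links, so the reply phase costs at most $k$ additional messages and $k$ rounds. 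Combining the two phases gives at most $2k=O(k)$ messages and $O(k)$ time for a discovery that succeeds along the greedily chosen branch.

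The main obstacle will be backtracking. When Finn's first-choice greedy neighbor fails to reach $d$ within its remaining budget, the current node marks $used(q)$ true and retries with the next-best unexplored neighbor, and in the pathological worst case this could inflate the total work to $O(b^{k})$ in the local branching factor $b$. I would resolve this by reading the $O(k)$ bound as the cost of a single successful probe, which is precisely what the \emph{FGR} tie-breaking rule is designed to produce in the geometric setting of a unit-disk graph where the greedy choice progresses monotonically toward $d$; a fully rigorous bound on worst-case backtracking would additionally require the assumption that the local node degree is $O(1)$, which is consistent with the sparse local neighborhoods used throughout the distributed algorithm. Under this reading, both the round complexity and the total message complexity of one \emph{MLDFS} invocation are controlled entirely by the hop-budget $k$ carried inside the token, giving the claimed $O(k)$ bound.
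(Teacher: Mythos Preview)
Your argument follows the same line as the paper's: the hop-budget $k$ carried in the $Tok$ message bounds how far the probe can go, hence $O(k)$ time and messages. The paper's own proof is in fact much terser than yours---it simply asserts that the token traverses at most $k$ hops in the worst case and stops there, without separating the discovery and reply phases or mentioning backtracking at all. You go beyond the paper by explicitly identifying the backtracking branch of Algorithm~\ref{algo:alternatePath} as a threat to the $O(k)$ bound and by proposing that the bound be read as the cost of a single greedy probe (or, alternatively, under a bounded-degree assumption). That concern is legitimate: the algorithm as written does permit a node to retry other neighbors after a failed branch, and the paper's proof simply does not engage with this. So your core approach matches the paper's, but your treatment is more careful; the extra discussion of backtracking is analysis the paper omits rather than a departure from its method.
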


\begin{proof}
This is easy to follow because an intermediate node including the recovery send $Tok()$ message to one of its neighbor based on the destination node position.  In worst case the $Tok()$ will traverse at most $k$ hop distance.  Therefore, time and message complexity is $O(k)$.
\end{proof}

\begin{lemma} \label{lem:2}
Shifting  algorithm for repairing the broken barrier runs in $O(N)$ round time with $O(NE)$  message overhead, where $E$ is the number of edges in the communication graph of the sensor network.
\end{lemma}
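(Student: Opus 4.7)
The plan is to bound separately (i) the length of the cascading chain of shifts triggered by a single failure, (ii) the per-shift round cost, and (iii) the per-shift message cost, and then multiply out.

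First I would argue that the chain of shifts has length at most $N$. When a barrier node $s_f$ fails, its recovery node $s_{rec}$ moves into $p_f$; if $s_{rec}$ was itself a barrier node, the vacancy at its former position looks exactly like a new failure, and is handled by $s_{rec}$'s recovery node. By the initialization phase (Algorithm \ref{algo:initial1}–\ref{algo:initial2}), the \emph{pathLength} values along this chain are strictly decreasing (each recovery node is chosen to be the predecessor/successor that is closer, in accumulated distance along the barrier, to the nearest non-barrier neighbor). Hence no node can appear twice in the chain, and the chain terminates as soon as a non-barrier recovery node is reached. Since the chain visits distinct nodes on the current barrier, its length is at most the number of barrier nodes, which is $O(N)$.

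Next I would analyse the cost per link of the chain. The detection/decision at the current recovery node uses only information it already stores in \emph{recSet}, so it takes $O(1)$ rounds. After the node relocates, however, the neighborhood structure around the two affected positions changes: the moving node must learn its new neighbors and their barrier/non-barrier status, and the affected barrier nodes may need to refresh their \emph{barNeighbor}, \emph{nonBarNeighbor}, \emph{pathLength}, and \emph{recNode} fields, which in the worst case requires the message-passing primitives from the initialization phase (of the form \emph{ReqNbRec}/\emph{RepNbRec}/\emph{SetRec}) to propagate through the communication graph. In the worst case a single such update costs $O(E)$ messages but still only $O(1)$ rounds, because the updates are localized to the moved node's one- and two-hop neighborhoods and can be overlapped.

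Multiplying out, the cascading shift terminates in $O(N) \cdot O(1) = O(N)$ rounds, and incurs at most $O(N) \cdot O(E) = O(NE)$ messages, matching the statement of the lemma. The main obstacle I expect is the second step: rigorously arguing that updates of \emph{pathLength} and \emph{recNode} triggered by each shift do not themselves cascade over a super-constant number of rounds. I would handle this by appealing to the monotonicity already used in step one—each shift only shortens the residual distance from the active hole to the nearest non-barrier node—so the recovery-node graph is only locally perturbed at each step, and the amortized round cost per shift remains $O(1)$ even though the global message bound is $O(E)$.
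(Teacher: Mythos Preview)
Your overall counting is correct and lands on the stated bounds, but the decomposition differs from the paper's in a way worth noting. The paper accounts for the $O(NE)$ messages not as $O(E)$ per shift times $O(N)$ shifts, but as a \emph{single} re-run of the recovery-node discovery (the initialization-phase routine) \emph{after} the entire cascade has completed; the $O(NE)$ bound is then simply inherited from the message complexity already established for that routine. Likewise, the $O(N)$ round bound in the paper comes purely from the assumption that one physical relocation takes one round, with at most $N$ relocations; no per-shift bookkeeping of \emph{pathLength}/\emph{recNode} updates is charged during the cascade. Your route is more careful about maintaining invariants mid-cascade, but this forces you to confront exactly the obstacle you flag---arguing that per-shift updates do not themselves cascade over many rounds---whereas the paper's deferral of all re-initialization to the end sidesteps that issue entirely. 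Your monotone-\emph{pathLength} argument for bounding the chain length is a nice refinement over the paper's blunt ``at most $N$ barrier nodes can move'' observation, but it is not needed for the stated bound.
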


\begin{proof}
After failure of a barrier node, the corresponding recovery node will shift to the failed sensor position in cascaded fashion. Therefore, in worst case all the nodes on the barrier will move to one of its neighbor node position. Hence, at most $O(N)$ number of movement will be there and it will take $O(N)$ rounds of operation. Assuming in a single round a sensor can move to any one of its neighboring sensor position. After shifting, the barrier nodes will again rediscover their recovery nodes, which will take $O(NE)$  message overhead. 
\end{proof}

\begin{theorem}\label{th:2}
The overall time  and message complexity of our distributed  restoring phase algorithm is $O(N)$ and $O(NE)$ respectively.
\end{theorem}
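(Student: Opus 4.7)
The plan is to decompose the overall restoring phase into its two constituent sub-phases established earlier in this section, namely the MLDFS-based alternate-path discovery and the cascaded sensor-shifting procedure, and then to sum the bounds from Lemma 1 and Lemma 2 of this section. The structure mirrors Theorem 2 of the centralized section (where the analogous decomposition gave $O(E+N^3)=O(N^3)$), so I would keep the argument short and rely on the two lemmas as black boxes.

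First I would observe that, whenever the recovery node of a failed barrier node invokes Algorithm \ref{algo:maintenance}, it executes \emph{Alternate-Path} first and only resorts to shifting if that attempt fails. Hence, in the worst case both sub-phases are performed in sequence, so the total cost per failure is the sum of the two individual costs. No other work is done in the maintenance phase (the non-barrier failure cases return immediately, and a recovery-node failure only re-triggers the initialization routine for the affected barrier node, whose cost is already accounted for under the re-initialization performed after shifting).

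Next I would invoke Lemma 1 of this section to bound the alternate-path sub-phase by $O(k)$ rounds and $O(k)$ messages, where $k$ is the prescribed maximum search depth (treated as $O(N)$ in the worst case). Then, applying Lemma 2, the shifting sub-phase together with the subsequent re-discovery of recovery nodes contributes $O(N)$ rounds and $O(NE)$ messages. Summing gives overall time $O(k)+O(N)=O(N)$ and overall message overhead $O(k)+O(NE)=O(NE)$, which is the claim.

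The main obstacle, if any, is justifying that the two sub-phases can simply be summed rather than multiplied. This requires arguing that (i) a single failure triggers at most one call to \emph{Alternate-Path} from its unique recovery node, and (ii) the cascaded shifting in Lemma 2 already absorbs the cost of re-establishing recovery pointers along the shifted portion of the barrier, so no additional initialization work outside the $O(NE)$ budget is incurred. Both facts follow directly from the pseudocode of Algorithms \ref{algo:maintenance} and \ref{algo:alternatePath}, so once pointed out, the summation step is immediate.
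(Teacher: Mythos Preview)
Your proposal is correct and follows essentially the same approach as the paper: decompose the restoring phase into the alternate-path sub-phase and the shifting sub-phase, invoke Lemma~1 ($O(k)$ time and messages) and Lemma~2 ($O(N)$ rounds, $O(NE)$ messages), and sum to obtain $O(k+N)=O(N)$ time and $O(k+NE)=O(NE)$ messages. Your additional justification for why the costs are additive is more careful than the paper's one-line proof, but the underlying argument is identical.
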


\begin{proof}
Restoring phase has two sub phases alternate path finding sub-phase and sensor shifting sub-phase. Together, 
it runs in $O(k +N) = O(N)$ times and $O(k + NE)= O(NE)$ message overhead.
\end{proof}


\section{Experimental Results}
\label{sec:results}

We perform detailed simulations to evaluate the performance of the proposed algorithm. In our simulation, we have used $N$ = 140, 160, 180 number of sensors. The sensors are dropped with uniform separation along a horizontal line within a rectangular barrier of length $L=4000$ units. The distance between two consecutive sensors is supposed to be $\psi = \frac{L}{(N-1)}$ from the left boundary to the right boundary. However, due to air flow, deployment error, and other reasons the sensors may not fall exactly where they are targeted. In the simulation, a random offset with a standard deviation $\delta=6$ units from the target position is considered as a deployment error as in \cite{saipulla:2009}. We assume uniform sensing range of $\rho=30$ units and the communication range to be twice of the sensing range. It is found experimentally that with this sensing range and the above deployment model, choosing $N < 140$ makes the intersection graph even disconnected in many cases and fails to form a barrier initially. Also, choosing $N = 180$ makes  the intersection graph quite dense which forms many alternate barriers and many non-barrier nodes. Hence, in the experiment we restrich the number of sensors  within 120 to 180.

In our experiments, the maximum depth allowed to find alternate path is limited to $k=N/20$. Sensors are started with an initial energy of $100$ units. The migration energy has been assumed to be $1$ unit per unit displacement. We start with an initial deployment that ensures the formation of an initial barrier and sensors are failed one by one at a time uniformly. Sensors are chosen randomly for failure and the barrier is reconstructed (if necessary and possible) before the failure of the next sensor.

The following metrics are measured to evaluate the performance.

\begin{itemize}

\item {\em Recovery rate}: This measures the ability of an algorithm to reestablish the barrier.   If $x$ sensors are failed one by one, and a barrier is reestablished for $y$ times $(y \leq x)$ of those $x$ failures,   then the recovery rate is defined as $\frac{y}{x}\times 100$. Thus it is the percentage of failures for which a barrier is   reestablished.

\item {\em Remaining energy}:  We measure the remaining energy level of sensors and show the percentage of the total number of sensors whose energy level is high (more than 90\% of their initial energy level) after final recovery. 


\item {\em Average total displacement}: This is another measure to find the cost incurred by an algorithm. It is the average of the total displacement made by all sensors to reestablish the barrier after each failure.
\end{itemize}

Each of these measures are computed over 100 runs and the results presented in this section are the average of these runs. We have shown the performance metric measures of the algorithm not after every individual node failure but after every 5\% node failure in between [5 to 30]\% of the total number of sensors present initially. In the rest of this section, we refer to our proposed cenralized algorithm as {\em C-Move} and distributed algorithm as {\em D-Move} for barrier coverage maintenance. Due to unavailability of any existing barrier maintenance algorithm in literature, we compare our algorithm's performance  with the following schemes.

\begin{enumerate}

\item  {\em No Move} ({\em N-Move}): In this scheme, sensors are static, and after every failure a centralized algorithm is used to find an alternate barrier. This algorithm acts as a very basic reference for comparison.

\item {\em Random Move} ({\em R-Move}): This is a variation of {\em D-Move} in which no explicit recovery node is maintained. On the failure of a sensor $s_i$, if $s_i$ has any non-barrier neighbor $s_j$ with enough energy to relocate to the position $p_i$, then $s_j$ is moved to replace $s_i$. If no such $s_j$ exists, any one of the predecessor or successor of $s_i$ with enough energy to relocate is selected at random and moved to take the position of $s_i$. The barrier nodes are then shifted in cascaded fashion in the same direction until a non-barrier neighbor is found.

\end{enumerate}

\begin{figure}[!h]
\centering

\includegraphics[angle=-90, width=5cm]{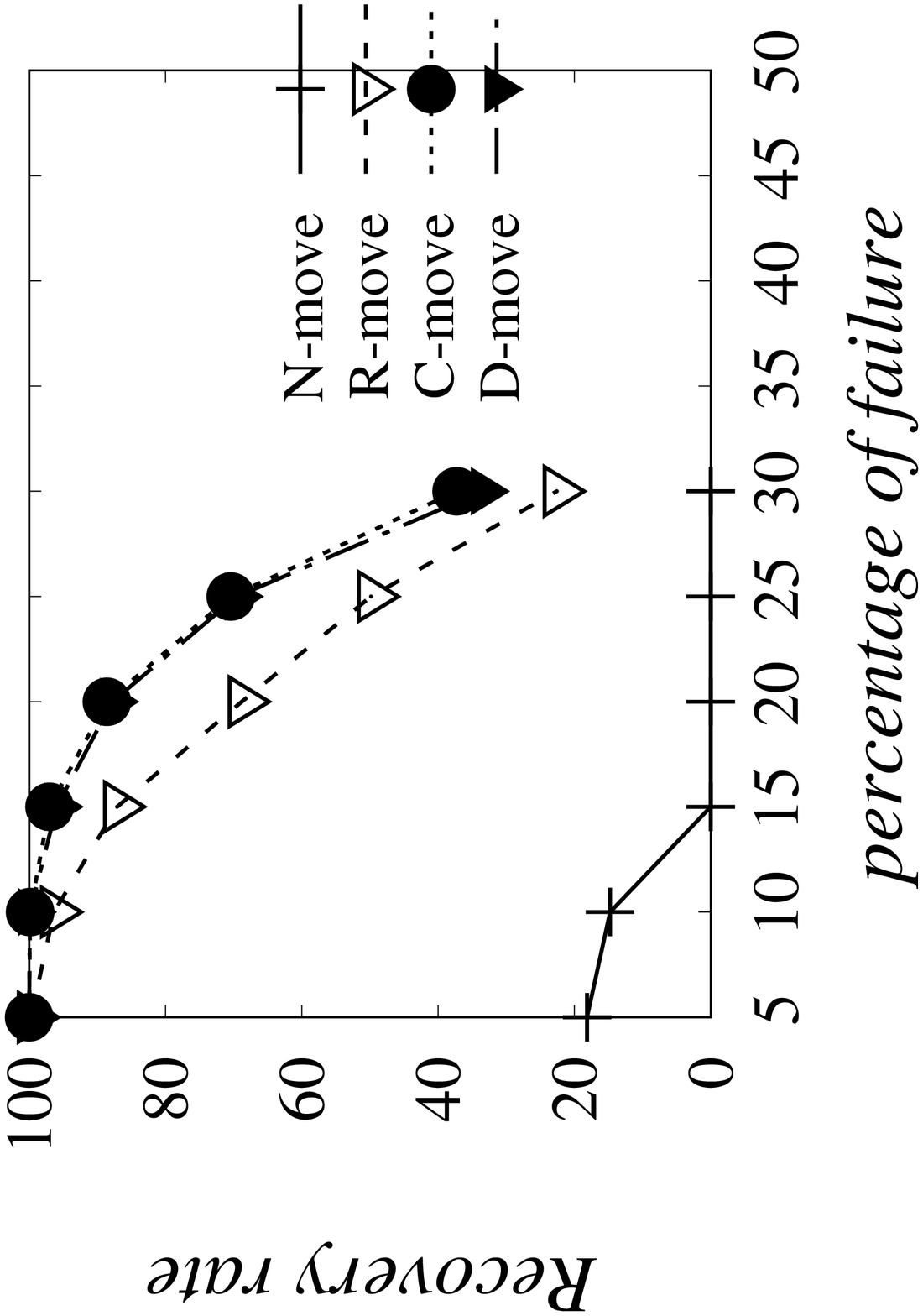}
\includegraphics[angle=-90, width=5cm]{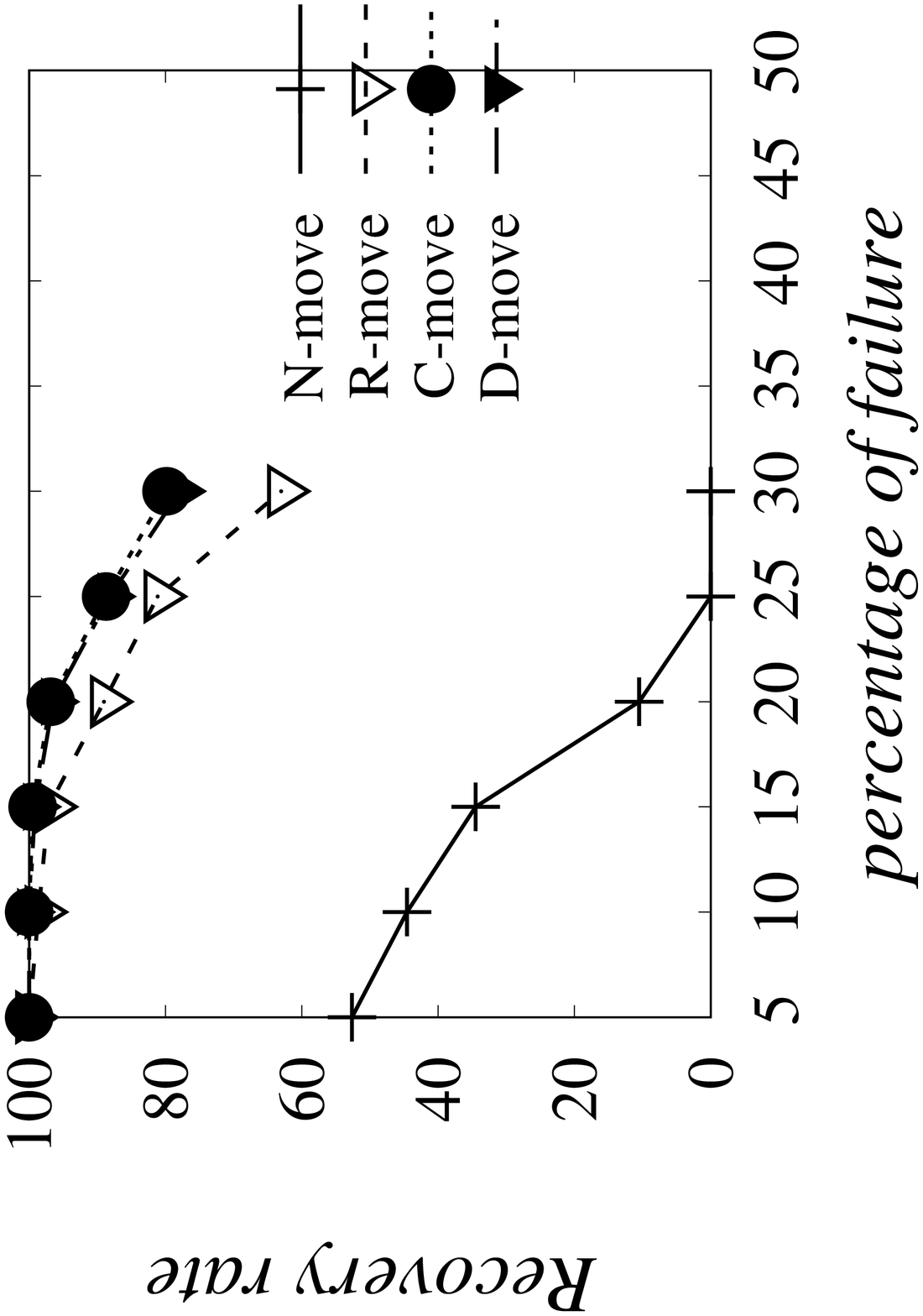}
\includegraphics[angle=-90, width=5cm]{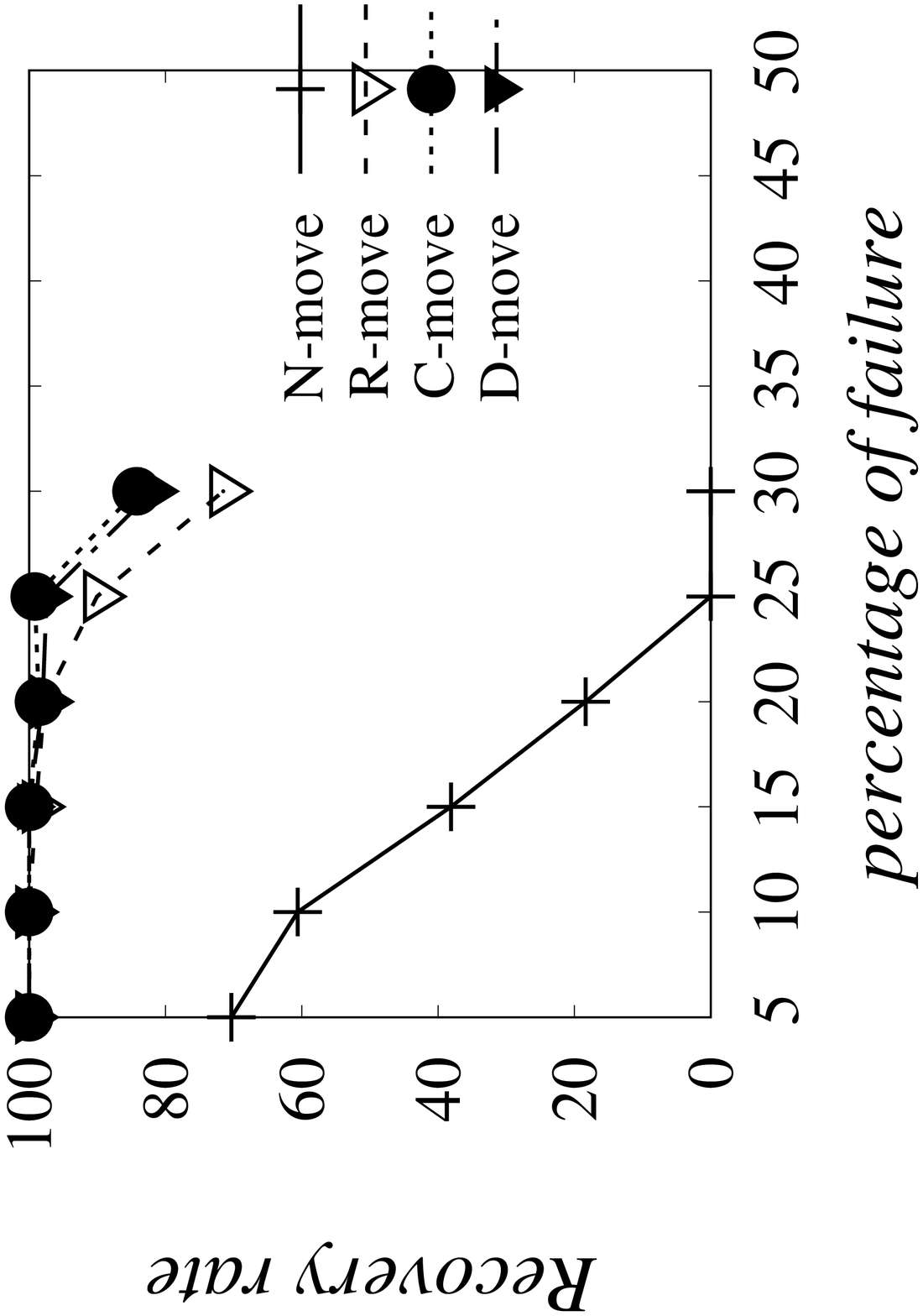}

\caption{Average recovery rate with varying number of sensors deployed (a) N=140   (b) N=160     (c) N=180 }
\label{chart:avg_rec}
\end{figure}


Figure \ref{chart:avg_rec} shows the variation of the average recovery rate of the algorithms with different percentages of failure. As expected, it is seen from the figure that if no sensors are allowed to be moved ({\em N-Move}), the recovery rate is relatively low even for a small percentage of sensor failures, going down to near 0 even with [15 to 25]\% failures. This is true even with $N=180$, though the recovery rate is slightly higher in this case due to the larger number of edges in the intersection graph.  But, this shows that finding alternate barriers alone is not sufficient to reestablish broken barriers and movement of sensors must be allowed for barrier coverage maintenance. The average recovery rate of {\em R-Move}, {\em C-Move} and {\em D-Move} increases as $N$ increases. This is expected as the number of non-barrier neighbors and the number of alternate paths in the intersection graph increase as $N$ increases. {\em C-Move} and {\em D-Move} show around 5-15\% better performance than {\em R-Move} in terms of recovery rate. The choice of a random direction may cause the cascaded movement of sensors to continue without finding a barrier node with a non-barrier neighbor. However, the use of minimum cost matching based algorithm {\em C-Move} and closest distance non-barrier node based algorithm {\em D-Move} ensure that the direction of cascaded movement chosen has a greater chance of avoiding the above bad case. However, with larger number of sensors, the chance of \emph{R-Move} falling into the above bad case is lowered, and thus \emph{R-Move}, {\em C-Move} and \emph{D-Move} show similar performance.

%




\begin{figure}
\centering

\includegraphics[angle=-90, width=5cm]{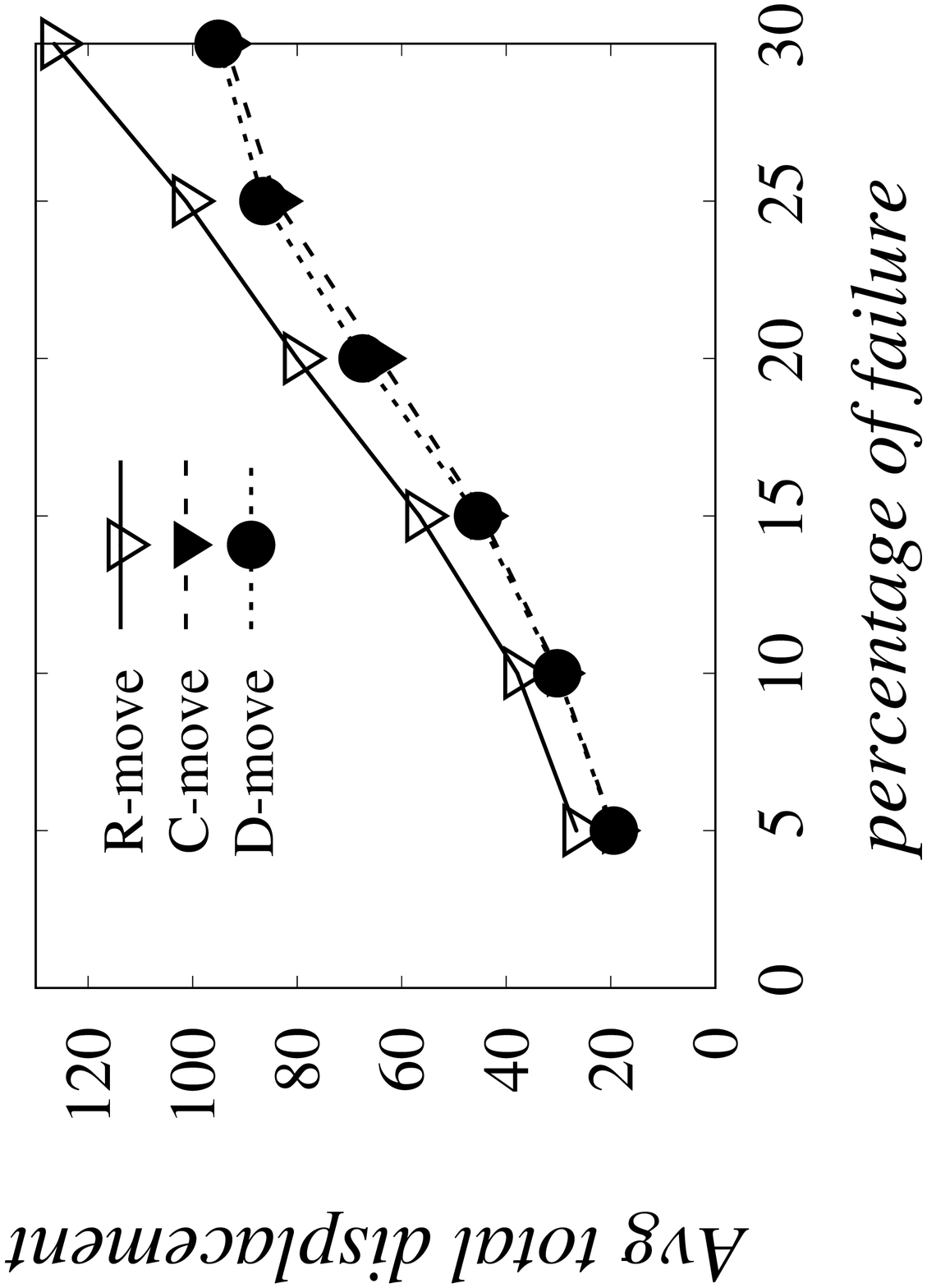}
\includegraphics[angle=-90, width=5cm]{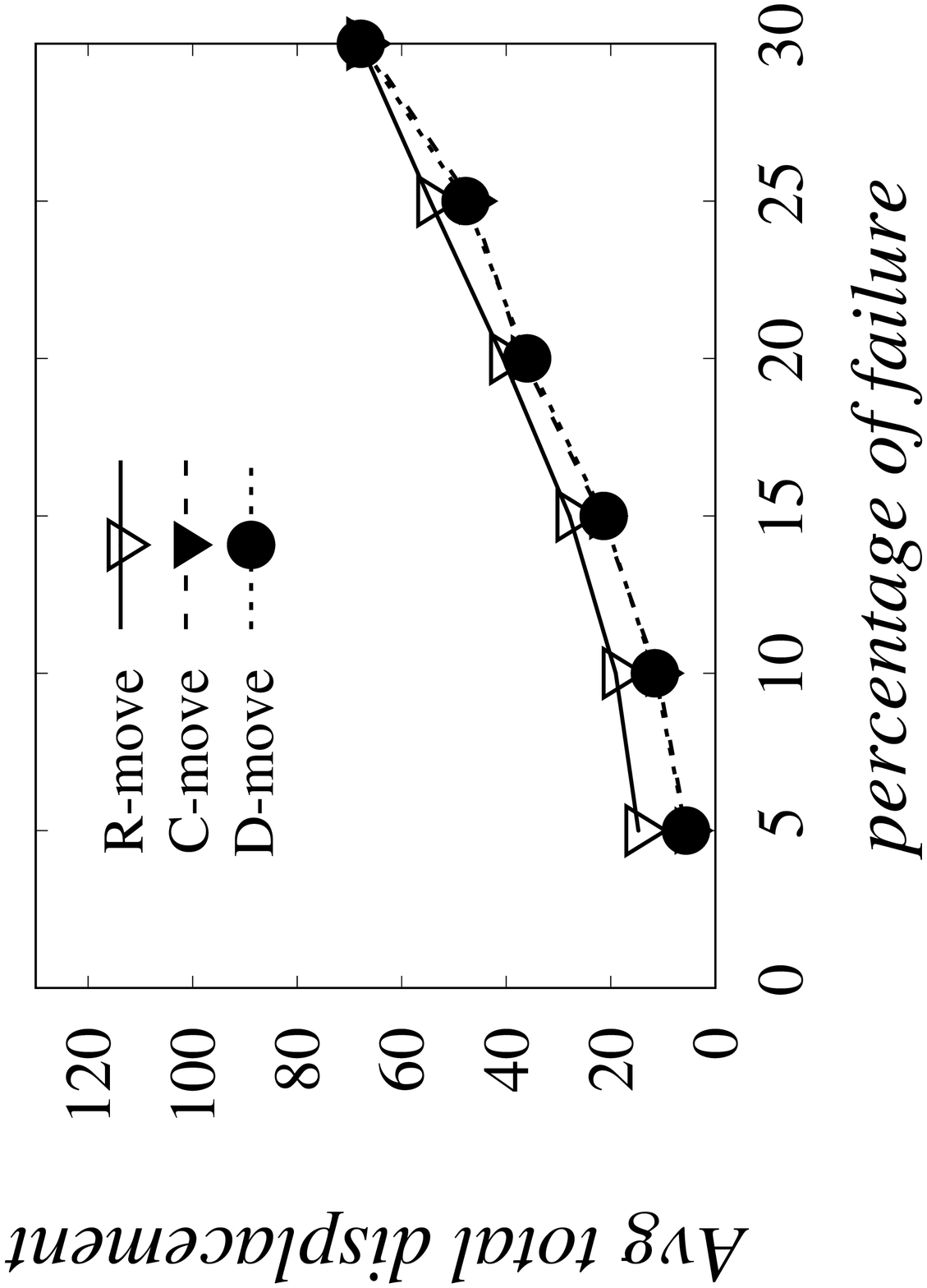}
\includegraphics[angle=-90, width=5cm]{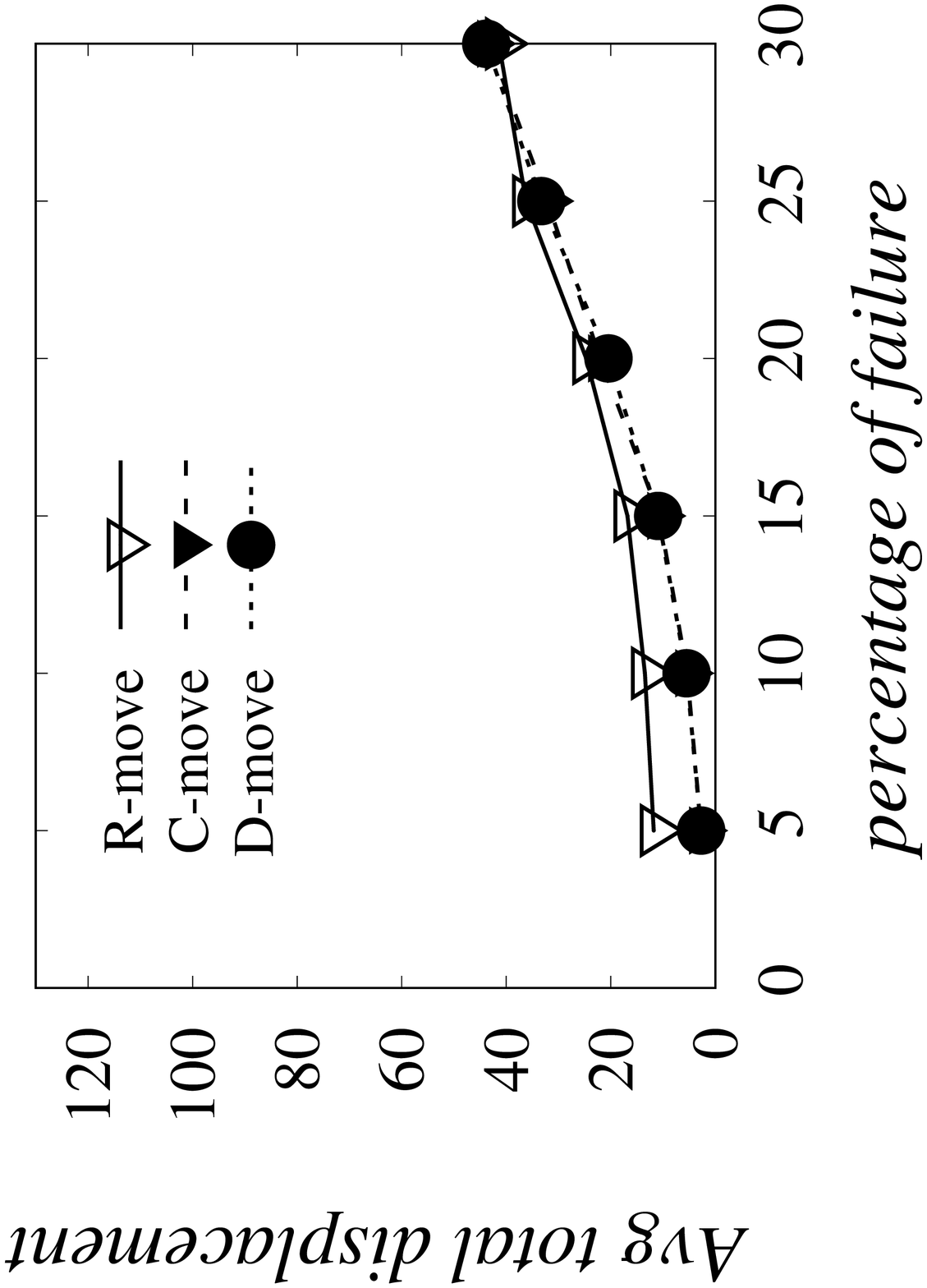}
\caption{ Average total displacement for each failure (a)  N=140    (b)  N=160    (c)  N=180 }
\label{chart:avg_total_disp}
\end{figure}

Figure \ref{chart:avg_total_disp} shows that the average total displacement increases with node failures. But the rate of increase is inversely proportional to the number  of initial sensors deployed. This is expected because with the increase number of sensors the chance of getting closer non barrier recovery node is also increases. We do not show the results for \emph{N-Move} as no sensors are moved in that case. However, as seen earlier, the average recovery rate of \emph{N-Move} is very low with respect to others and hence the zero cost is not of any consequence. Figure \ref{chart:avg_total_disp} shows that the barrier recovery algorithm \emph{C-Move} and \emph{D-Move} make almost same amount of total displacement and is lesser than the \emph{R-Move} algorithm. This is due to alternate path determination stage before shifting as well as preprocessing steps used to find closest recovery node. Most of the cases barrier is recovered only using alternate path without shifting.


\begin{figure}[!h]
	\centering

	\includegraphics[angle=-90, width=8cm]{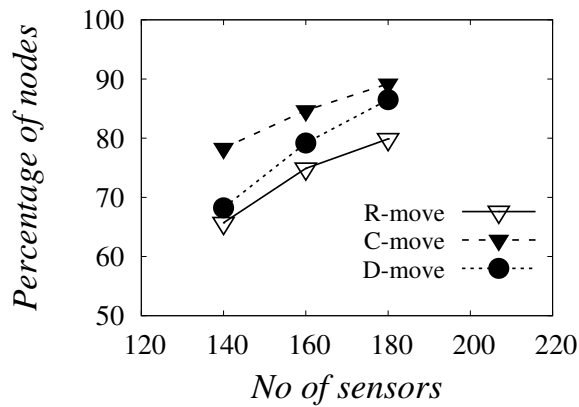}
	\caption{ Remaining energy more than 90\% after recovery}
	\label{chart:avg_energy}
\end{figure}

Figures \ref{chart:avg_energy} presents average energy profile of the sensors after recovery. We count the number of sensors whose remaining energy is more than 90\% of their initial energy level after recovery.  Figures \ref{chart:avg_energy} shows how high-energy profile of the sensors varies with the number of sensors. It is expected that the number of sensors with high remaining energy increases  as sensor number of initial sensors increases. But one can see from the figure that the number of sensor with high remaining energy  in {\em C-Move} algorithm is higher than other two algorithms, and it is constantly high for different values of N. In this result the \emph{N-Move} algorithm is not included, as we consider the energy expenditure due to sensor mobility not for message passing overhead.



\section{Conclusion}
\label{sec:conclusions}

In this paper, we have addressed the problem of maintaining barrier coverage in the presence of sensor failure. After the creation of initial barrier, our proposed algorithm will reconstruct an alternate barrier after every sensor failure. The reconstruction is done either by finding an alternate barrier segment close to the failed sensor node which can be patched up with the broken barrier to reconstruct the whole barrier, or by moving a small number of nodes in a cascaded fashion to recover the gap created by failed node. During node shifting remaining energy and remaining displacement capacity of the sensors are taken into consideration. Barrier repairing by cascaded node shifting with limited mobility sensors problem is formulated as minimum cost bipartite matching problem. A possible localized implementation using message-passing is presented. Our centralized algorithm can handle multiple failures but our localized algorithm is capable of handling multiple failures one after another but not simultaneous failures. Experimental results are presented to show that the algorithm performs well in comparison with existing centralized algorithm which reconstructs the whole barrier by placing existing nodes in equal separation.  In future, we will extend the solution, where instead of restoring the existing barrier with minimum total displacement, we will create a new barrier with minimum total movement and it also respects the remaining energy of the existing sensors.

\remove{
\section{Acknowledgment}

Insert the Acknowledgment text here.
}

%

\end{document}